\DeclarePairedDelimiter\floor{\lfloor}{\rfloor}
\newtheorem{theorem}{Theorem}[section]
\newtheorem{lemma}[theorem]{Lemma}
\newtheorem{proposition}[theorem]{Proposition}
\newtheorem{conjecture}[theorem]{Conjecture}
\newtheorem{corollary}[theorem]{Corollary}
\theoremstyle{definition}
\newtheorem{definition}[theorem]{Definition}
\theoremstyle{remark}
\newtheorem{remark}[theorem]{Remark}
\def\Fq{{\mathbb F}_q}
\def\a{{\alpha}}
\newcommand{\tr}{\operatorname{tr}}
\newcommand{\diag}{\operatorname{diag}}
\newcommand{\Ev}{\operatorname{Ev}}
\newcommand{\WH}{\operatorname{W_H}}
\newcommand{\rank}{\operatorname{rank}}
\newcommand{\RS}{\mathrm{Rowspan}}
\newcommand{\wt}{\mathrm{w_H}}
\newcommand{\supp}{\mathrm{Supp}}
\def\Psm{\mathbb{P}^{{m+1\choose 2}-1}}
\def\X{\mathbf{X}}
\def\fk{f^{\delta}_k(\X)}
\def\fkA{f^{\delta}_k(A)}
\def \Wktd{\mathrm{W}_k^{\delta}(t, m)}
\def \wkrd{\mathrm{w}_k^{\delta}( r, m)}
\def\Tkrd{T_k^\delta(r, m)}
\def\LBrm {\lambda_B(r, m)}
\def\LBr1m {\lambda_B(r-1, m)}
\def\GBarm {\gamma_{\a}(B, r, m)}
\def\Sm{S_m}
\def\PStm{\mathbf{\hat{\mathbb{S}}}(t, m)}
\def\Stm{\mathbb{S}(t, m)}
\def \C {C_{symm}(t, m)}
\def \CA {\widehat{C}_{symm}(t, m)}
\def \CAt {C_{symm}(2t, m)}
\begin{document}
	
	\title[Symmetric Determinantal Codes ]{Linear Codes Associated to Symmetric Determinantal Varieties: Even Rank Case}
	\author{Peter Beelen }
\address{Department of Applied Mathematics and Computer Science,\newline \indent
	Technical University of Denmark,\newline \indent
	Matematiktorvet 303B, 2800 Kgs. Lyngby, Denmark.}
\email{pabe@dtu.dk }

	\author{Trygve Johnsen}
\address{Department Mathematics and Statistics,\newline \indent
	UiT: The Arctic University of Norway,\newline \indent
	Hansine Hansens veg 18, 9019 Tromsø, Norway.}
\email{trygve.johnsen@uit.no}

	\author{Prasant Singh}
\address{Department of  Mathematics,\newline \indent
	IIT Jammu,\newline \indent
	 Jammu \& Kashmir, 181221.}
\email{psinghprasant@gmail.com}	

	\date{\today}
	
\begin{abstract}
	We consider linear codes over a finite field $\Fq$, for odd $q$, derived from  determinantal varieties, obtained from symmetric matrices of bounded ranks. A formula for the weight of a codeword is derived. Using this formula, we have computed the minimum distance for the codes corresponding to matrices upperbounded by any fixed, even rank. A conjecture is proposed for the cases where the upper bound is odd. At the end of the article, tables for the weights of these codes, for spaces of symmetric matrices up to order $5$, are given. 
	
	We also correct typographical errors in Proposition 1.1/3.1 of \cite{BJS}, and in the last table, and we have  rewritten Corollary 4.9 of that paper, and the usage of that Corollary in the proof of Proposition 4.10.

\end{abstract}

\maketitle

\section{Introduction}
One of the  most natural and interesting ways to construct classes of linear error-correcting codes is by using the language of projective systems and considering the $\Fq$-rational points of a projective algebraic variety defined over $\Fq$ as a projective system. For  a more detailed study of projective systems and linear codes associated to them, we refer to \cite[Chapter 1]{TVN}. Determinantal varieties in different projective spaces are  classical examples of projective algebraic varieties defined over finite fields, and therefore it is natural to study codes associated to these varieties. The study of codes associated to generic determinantal varieties was initiated by Beelen-Ghorpade-Hasan \cite{BGH}, and the minimum distance of the code associated to the variety defined by the vanishing of $2\times 2$ minors  of a generic matrix was determined. Later Beelen-Ghorpade \cite{BG} studied other codes associated to  generic determinantal varieties and together with several interesting properties, the minimum distances also of these codes were computed.

Determinantal varieties in the space of symmetric (skew-symmetric) matrices, i.e. the variety defined by the vanishing of all $(t+1)\times (t+1)$ minors of a generic symmetric (skew-symmetric) matrix of order $m$, where $t\le m$, is another example of classical varieties defined over a finite field $\Fq$. For a detailed exposition of these varieties, we refer to \cite{HTu, JLP}. Like generic determinantal varieties, these varieties are also an interesting class of $\Fq$-varieties with many $\Fq$-rational points. Therefore, it is natural to study the codes corresponding to these varieties. The study of codes associated to determinantal varieties in the space of skew-symmetric matrices was initiated by the first and the last named authors \cite{BS}. They found an iterative formula for the weight of a general codeword and using this formula they computed the minimum distance of these codes. In the present article, we consider the codes associated to determinantal varieties in the space of $\it{symmetric}$ matrices, and in the even rank cases, we compute the minimum distance of these codes.

Linear codes corresponding to symmetric determinantal varieties of rank at most $t$ can be obtained by evaluating homogeneous, linear functions in $m^2$ variables $X_{1,1},\ldots,X_{1,m},\ldots,X_{m,1},\ldots,X_{m,m}$ at points corresponding to symmetric matrices $[X_{i,j}]$ of rank at most $t$.
All important information about these codes  $\C$ can be read off from their shorter ``sister codes" $\CA$, each of which is obtained from $\C$, by skipping the zero matrices, and otherwise evaluating the linear functions in only one matrix in each multiplicative equivalence class of matrices of rank at most $t$.
In other words, we then consider the projective system of such matrices. We will consider these codes in parallel. Finding the length and dimensions of these codes is easy, and the following result will be proved in Section \ref{3}:
\begin{proposition} \label{first}
	
The linear code $\CA$ is an $[\hat{\nu}_m(t), \hat{k}]$ linear code where $\hat{\nu}_m(t)$ is given by equation \eqref{eq: carDet} below with
	\begin{equation}
\label{eq:LenDim}	
 {\nu}_m(t)= \sum\limits_{r=0}^t\left(\prod_{i=1}^{ \floor*{\frac{r}{2}}}\frac{q^{2i}}{q^{2i}-1}\prod_{i=0}^{r-1} (q^{m-i}-1)\right)\text{ and } \hat{k}={m+1\choose 2}.
\end{equation}
\end{proposition}
  To find the minimum distance and possible weights of codewords, however, a lot of calculations are needed. A first step in order to obtain control over the situation is taken in Propositions \ref{Prop: Diag} and \ref{squares}, where one shows that there are at most $2m+1$ weights that are possible.
In one of our main results, Theorem \ref{thm: weightoffk}, we determine these weights. Our results are analogues of
\cite[Corollary 3.3, Theorem 3.4]{BS}, and also of
\cite[Theorem 3]{N} for Grassmann codes $C(2,m)$. Theorem \ref{thm: weightoffk}, however, is not sufficiently transparent to read off the minimum distance of our codes immediately.  Let the symmetric affine determinantal variety $\mathbf{{\mathbb{S}}}(2t, m)$ be defined by the vanishing of all $(2t+1)\times(2t+1)$ minors of a generic symmetric matrix of size $m$.

The main result of this article is the following theorem:
\begin{theorem} \label{main}
   The minimum distance of the code $\widehat{C}_{symm}(2t, m)$  is
   $$
	 q^{m-1}\nu_{m-1}(2t-2) + q^{2t-1}\mu_{m-1}(2t-1)
	 + (q^{2t-1}-q^{-1})\mu_{m-1}(2t),
	 $$
  where $\nu_{m-1}(2t-2) =\lvert\mathbf{{\mathbb{S}}}(2t-2, m-1)\rvert$, and $\mu_{m-1}(2t-1)$ denotes the number of symmetric matrices of size $m-1$ and rank $2t-1$.
\end{theorem}
Theorem \ref{main}, in combination with Proposition \ref{first}, give the 3 basic parameters (length, dimension, minimum distance)  of our codes.

Theorem \ref{main} is parallel to \cite[Theorem 3.11]{BS} for codes from skew-symmetric matrices. The computational complexity, however, building on  Theorem \ref{thm: weightoffk},  is much larger than in the analogous computations in the skew-symmetric case, and to a great part, it involves a detailed treatment of quadratic hypersurfaces in projective and affine spaces and the three classes (hyperbolic, elliptic, and parabolic) of such hypersurfaces. In contrast to skew-symmetric matrices, the symmetric ones  also may have odd rank, but for reasons of space, we leave the determination of the minimum distances of $\C$ and $\CA$ for odd $t$ to future research. We end this paper, however, by giving a conjecture, and also a list of all  possible concrete weights appearing, for $m=3,4,5 $ and all $t$ up to $5$.
These results clearly indicate that our conjecture holds.

\section{Symmetric Determinantal Varieties over the field $\Fq$}
Let $m$ be a positive integer, $q$ be a prime power, and let $\Fq$ be the finite field with $q$ elements. An $m\times m$ matrix $A=(a_{ij})$ over $\Fq$ is called \emph{symmetric} if $a_{ij}= a_{ji}$ for every $1\leq i,\, j\leq m $. We use the notations $ \Sm$ to denote the space of $m\times m$ symmetric matrices over $\Fq$ and $A^T$ to denote the transpose of a matrix $A$. It is well known that $\Sm$ can be identified with the affine space $\mathbb{A}^{m+1\choose 2}$. Let $\X=(X_{ij})$ be an $m\times m$ matrix of independent indeterminates $X_{ij}$ over $\Fq$. For $0\leq t \leq m$, we denote by $\Stm$, the affine algebraic variety in $\Sm$ given by the vanishing of $(t+1)\times (t+1)$ minors of $\X$; in other words
\begin{equation}
\label{eq: SymmDet}
\Stm= \{A\in\Sm: \rank(A)\leq t\}
\end{equation}
For any $0\leq r\leq m$ we define
\[
S(r, m) = \{A\in\Sm: \rank(A)=r\}\qquad \text{ and }\qquad 
\mu_m(t)= |S(r,m)|.
\]
The following formula from \cite[Theorem 2]{MacWilliams} gives the number of symmetric matrices of size $m$ and rank $r$
\begin{equation}
\label{eq: SymmFix}
\mu_m(r)= \prod_{i=1}^{ \floor*{\frac{r}{2}}}\frac{q^{2i}}{q^{2i}-1}\prod_{i=0}^{r-1} (q^{m-i}-1).
\end{equation}
From the definition it is clear that
\[
\Stm = \bigcup\limits_{r=0}^t S(r, m)
\]
and the union is disjoint. Let $\nu_m(t)$
be the cardinality of the set $\Stm$. We have
\begin{equation}
\label{eq:FixRanDec}
\nu_m(t)=\sum\limits_{r=0}^t \mu_m(r).
\end{equation}
Note that the defining equations of $\Stm$ are square-free homogeneous equations of degree $t+1$. Consider the projective variety in $\Psm$ defined by these equations. This variety in the projective space $\Psm$ is known as the \emph{symmetric determinantal variety} and we denote it by $\PStm$. Note that $\Stm$ is the affine cone over $\PStm$. Therefore, if $\hat{\nu}_m(t)$ denote the cardinality of the set $\PStm$, then
\begin{equation}
\label{eq: carDet}
\hat{\nu}_m(t) = \frac{\nu_m(t)-1}{q-1}.
\end{equation}
\section{The Linear Code Associated to the Symmetric Determinantal Variety $\PStm$} \label{3}
We open this section by recalling the notion of a projective system. For details, we refer to \cite[Ch. 1]{TVN}. An $[n, k, d]$ projective system is a (multi)set $\Omega\subseteq\mathbb{P}^{k-1}$ that is not contained in a hyperplane of $\mathbb{P}^{k-1}$, and the parameters $n$ and $d$ are defined by
\begin{equation*}
|\Omega|=n,\qquad d=n- \max\{|\Omega\cap H|:\; H\text{ a hyperplane of }\mathbb{P}^{k-1}\}.
\end{equation*}
Every $[n, k, d]$ projective system corresponds to an equivalence class of $[n, k, d]$ linear code and vice-versa. Linear codes corresponding to several classical varieties have been studied by several authors. In this section, we are going to study the projective systems, and hence linear codes, associated to the determinantal varieties $\PStm$. First, we briefly give the construction of this linear code corresponding to $\PStm$.

Let $\Fq[\X]_1$ denote the vector space over $\Fq$ of linear homogeneous polynomials in $m^2$  variables $\X_{ij}$, for $1\leq i, j\leq m$. Let $\{P_1,\ldots, P_{\hat{\nu}_m(t)}\}\subseteq\Stm$ be a subset of representatives of distinct points of $\PStm$ in some fixed order. Consider the evaluation map

\begin{equation}
\label{eq:SymmCode}
\Ev: \Fq[\X]_1\to \mathbb{F}_q^{\hat{\nu}_m(t)}\; \text{defined by }f(\X)\mapsto c_f=(f(P_1),\ldots, f(P_{\hat{\nu}_m(t)})).
\end{equation}
The image of the evaluation map is a linear subspace of $ \mathbb{F}_q^{\hat{\nu}_m(t)}$. This subspace is called the  \emph{symmetric determinantal code} and is denoted by $\CA$.

\begin{proposition}
	\label{Prop:LenDim}
	The linear code $\CA$ is a $[\hat{\nu}_m(t), \hat{k}]$ linear code where $\hat{\nu}_m(t)$ is given by equation \eqref{eq: carDet} with
	\begin{equation}
	\label{eq:LenDimSymm}	
 {\nu}_m(t)= \sum\limits_{r=0}^t\left(\prod_{i=1}^{ \floor*{\frac{r}{2}}}\frac{q^{2i}}{q^{2i}-1}\prod_{i=0}^{r-1} (q^{m-i}-1)\right)\text{ and } \hat{k}={m+1\choose 2}.
\end{equation}
\end{proposition}
\begin{proof}
The proof of the proposition is simple. The statement about the length of the code $\CA$ follows from equations \eqref{eq: SymmFix},\;\eqref{eq:FixRanDec} and \eqref{eq: carDet}.
This gives the length of the code. For the dimension of the code, it is enough to prove that $\PStm$ does not lie on a hyperplane of $\Psm$. Equivalently, it is enough to show that for $1\leq t\leq m$, the span of the affine algebraic variety $\Stm$ is the vector space $\Sm$. Let $E_{ij}$ be the $m\times m$ matrix with $(i,j)^{\it th}$ entry $1$ and all other entries zero. It is well known that the set $\{E_{ij} + E_{ji}:1\leq i, j\leq m\} \cup \{E_{ii}: 1 \le i\le m\}$ is a basis of $\Sm$, therefore it is enough to show that the span of  $\Stm$ contains the set $\{E_{ij} + E_{ji}:1\leq i, j\leq m\}\cup \{E_{ii}: 1 \le i\le m\}$. Note that $S(1, m)\subseteq\Stm$ for every $1\leq t\leq m$. Therefore, we will show that for every $i, j$ the matrix $E_{ij}+E_{ji}$ can be written as a sum of rank one symmetric matrices. If $i=j$ then there is nothing to do as $E_{ii}$ itself is a rank one symmetric matrix. So assume $i<j$. Let $[X_{ij}]$ be the $m\times 1$ column vector with the $i^{\it th}$ and $j^{\it th}$ entries being one, and the rest of the coordinates being zero. Then $X_{ij}X_{ij}^T$ is a rank one symmetric matrix and
\[
E_{ij}+ E_{ji}= X_{ij}X_{ij}^T - E_{ii} - E_{jj}.
\]
This completes the proof.

\end{proof}
\begin{remark}
	\label{rmk: affine}
	Consider the linear map $\Ev_1:\Fq[\X]_1\to \Fq^{\nu_m(t)}$ defined by $f(\X)\mapsto (f(A_1),\ldots,f(A_{\nu_m(t)})$ where $\{A_1, \ldots, A_{\nu_m(t)}\}=\Stm$ in some fixed order. The image of this linear map is a subspace of $ \Fq^{\nu_m(t)}$ and we denote this subspace by $\C$. The codes $\CA$ and $\C$ are quite similar in the sense that the length of these codes is related by equation \eqref{eq: carDet} and both are of the same dimension. Further, let $f(\X)\in\Fq[\X]_1$ be a linear polynomial and $\widehat{c}_f\in\CA$ and  $c_f\in\C$ are the corresponding codewords, then the Hamming weights of these codewords are related by
	\begin{equation}
	\label{eq: wtafftoDet}
	\WH(c_f) =(q-1) \WH(\widehat{c}_f).
	\end{equation}
	Therefore to understand the minimum distance of the code $\CA$ it is enough to understand the minimum distance of the code $\C$. Therefore, for the rest of the section, we will concentrate on the code $\C$.
\end{remark}

For the rest of the article, we assume that $q$ is odd
(in the case of even $q$ one encounters difficulties with some of the calculations that we apply to obtain our results, but it is conceivable that the cases of even and odd $q$ are very similar. We make no claims about this case). Let $f=\sum f_{ij}X_{ij}\in\Fq[\X]_1$ be a  function and $c_f\in \C$ be the corresponding codeword. Let $F=(f_{ij})$ be the coefficient matrix of $f$ and $A$ be a symmetric matrix. Then
\[
f(A)= \tr(FA).
\]
Since the characteristic of the field is not $2$ and $A$ is a symmetric matrix we can find a symmetric matrix $G$ such that $\tr(FA)= \tr(G A)$ for every symmetric matrix $A$. Indeed, we choose $G $ to be the matrix whose $(i,j)^{\it th} $ entry is $\frac{f_{ij}+f_{ji}}{2}$ for $1\leq i, j\leq m$. Therefore, for every codeword $c\in \C$ there is a symmetric matrix $F=(f_{ij})$ such that the codeword $c$ is given by
\[
c= (\tr(FA_1), \ldots, \tr(FA_{\nu_m(t)}))
\]
where $A_1, \ldots, A_{\nu_m(t)}$ are matrices in $\Stm$. If $F$ is a matrix and $f(\X)\in\Fq[\X]_1$ is a function whose coefficient matrix is $F$, then we use the notation $c_F$ to denote the codeword $c_f\in\C$. In the next proposition, we see that corresponding to every codeword $c\in\C$, there is a diagonal matrix such that the weights of $c$ and the codeword corresponding to the diagonal matrix are the same.

\begin{proposition}
		\label{Prop: Diag}
	For every codeword $c\in \C$ there exists a diagonal matrix $F=\diag (1, 1,\ldots, 1, \delta, 0,\ldots 0)$ with  $\delta$ a non-zero element of $\Fq$, and
	\[
	\WH(c) = \WH(c_F).
	\]
\end{proposition}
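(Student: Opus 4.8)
The plan is to reduce an arbitrary codeword $c_F$ (with $F$ symmetric, by the discussion preceding the statement) to a diagonal one by using the action of $\GL_m(\Fq)$ by congruence on the space $\Sm$. First I would observe that for any $P \in \GL_m(\Fq)$ and any symmetric matrix $A$, one has $\tr(FA) = \tr(F P (P^{-1}AP^{-T}) P^T) = \tr\bigl((P^T F P)(P^{-1} A P^{-T})\bigr)$; and the map $A \mapsto P^{-1}AP^{-T}$ is a bijection of $\Sm$ onto itself that preserves rank, hence preserves $\Stm$ and merely permutes the evaluation points $A_1,\ldots,A_{n_s(t,m)}$. Consequently $c_F$ and $c_{P^T F P}$ have the same multiset of coordinates, so $\WH(c_F) = \WH(c_{P^T F P})$. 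Thus it suffices to show that every symmetric matrix $F$ is congruent (via $F \mapsto P^T F P$) to a diagonal matrix of the stated shape.

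The second step is the classification of symmetric bilinear forms over a finite field of odd characteristic: every symmetric $F$ of rank $s$ is congruent to $\diag(1,\ldots,1,\delta,0,\ldots,0)$ with $s-1$ ones, a single entry $\delta \in \Fq^\times$, and $m-s$ zeros, where the class of $\delta$ modulo squares is the discriminant of the form. This is standard (Gram–Schmidt diagonalization, then scaling each nonzero diagonal entry by a square to normalize it to $1$, absorbing all the scaling defects into the last nonzero entry since $\Fq^\times/(\Fq^\times)^2$ has order two). I would cite this rather than reprove it; if $F = 0$ the statement is vacuous with $\delta = 0$ in the degenerate reading, and otherwise $s \ge 1$ and the normal form is exactly $\diag(1,\ldots,1,\delta,0,\ldots,0)$.

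Combining the two steps: given $c \in \CA$, pick a symmetric coefficient matrix $F$ with $c = c_F$ (possible by the paragraph before the proposition), diagonalize $F = P^T \diag(1,\ldots,1,\delta,0,\ldots,0) P$, and conclude $\WH(c) = \WH(c_F) = \WH(c_{P^T F P})$ with the right-hand matrix of the desired form. The only genuinely substantive ingredient is the invariance of the weight under congruence, and the only subtlety there is checking carefully that $A \mapsto P^{-1}AP^{-T}$ maps $\Stm$ bijectively to itself — which is immediate since congruence preserves rank — so that the two codewords really do have identical coordinate multisets up to reordering. I expect the main (minor) obstacle to be bookkeeping: making sure the identity $\tr(FA) = \tr\bigl((P^TFP)(P^{-1}AP^{-T})\bigr)$ is written with the transposes in the correct places and that the reindexing of evaluation points is spelled out, rather than any conceptual difficulty.
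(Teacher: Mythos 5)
Your proposal is correct and follows essentially the same route as the paper: the paper likewise notes that $A\mapsto L^TAL$ permutes each $S(r,m)$ (hence $\Stm$), uses the trace identity $\tr(GA)=\tr(LFL^TA)=\tr(F(L^TAL))$ to conclude weight invariance under congruence, and invokes the diagonalization of symmetric matrices over $\Fq$ (citing MacWilliams) to reach the normal form $\diag(1,\ldots,1,\delta,0,\ldots,0)$. The only difference is your choice of citation for the diagonalization step, which is immaterial.
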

\begin{proof}
For any non-singular $m\times m$ matrix $L$ and any $0\leq r\leq t$, the map $P\mapsto L^TPL$ gives a permutation of the set $S(r, m)$ and hence of the set $\Stm$. For any codeword $c$ we can find a symmetric matrix $F$ such that $c= c_{F}$. Since $F$ is symmetric, we can find \cite[Lemma 6]{MacWilliams} a nonsingular matrix $L$ such that $LFL^T =G =\diag (1, 1,\ldots, 1, \delta, 0,\ldots 0)$ as in the theorem. Now we claim that $G$ is the required matrix. Note that
\[
\tr(GP)= \tr(LF L^TP)= \tr(F(L^TPL)).
\]
This completes the proof of the theorem.	
\end{proof}
In the next proposition, we see that the weight of a codeword $c_F\in \C$ corresponding to a diagonal form $F$ as in the above proposition depends only on whether $\delta$ is a square or a non-square.

\begin{proposition}
\label{squares}
Let $F_i=\mathrm{diag}(1,1,\ldots,1,\delta_i,0,\ldots,0)$ for $i = 1, 2$ be two diagonal matrices of the same rank $k$ and let $c_{F_i}\in \C$ be the corresponding codewords. If $\delta_2/\delta_1$ is a square, then $W_H(c_{F_1})=W_H(c_{F_2})$.
\end{proposition}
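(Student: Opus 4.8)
The plan is to reduce the equality of weights to a change of variables on the symmetric matrices being evaluated, mirroring the argument in Proposition \ref{Prop: Diag}. Write $\delta_2/\delta_1 = \mu^2$ for some $\mu \in \Fq^*$, and consider the diagonal matrix $D = \diag(1,\ldots,1,\mu,1,\ldots,1)$ with the entry $\mu$ in the position of the ``$\delta$''-slot (i.e. position $k$). First I would observe that $D^T F_1 D = \diag(1,\ldots,1,\mu^2\delta_1,0,\ldots,0) = F_2$, using that all the leading $k-1$ entries are $1$ and the trailing entries are $0$, so conjugating only scales the $k$-th diagonal entry by $\mu^2$.

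Next, for any symmetric matrix $P$ we have
\[
\tr(F_2 P) = \tr(D^T F_1 D\, P) = \tr\big(F_1 (D P D^T)\big),
\]
and since $D$ is symmetric this equals $\tr(F_1(DPD))$. The map $P \mapsto DPD$ is an $\Fq$-linear bijection of $\Sm$ onto itself, and because $D$ is nonsingular it preserves rank: $\rank(DPD) = \rank(P)$. Hence it restricts to a bijection of $S(r,m)$ onto itself for each $r$, and therefore to a permutation of $\Stm = \bigcup_{r=0}^t S(r,m)$. Composing with the fixed ordering of $\Stm$, this permutation sends the coordinate-tuple $(\tr(F_1 A_1), \ldots, \tr(F_1 A_{n_s(t,m)}))$ to a reordering of itself, and by the displayed identity the reordered tuple is exactly $(\tr(F_2 A_1), \ldots, \tr(F_2 A_{n_s(t,m)})) = c_{F_2}$. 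A permutation of coordinates does not change the number of nonzero entries, so $\WH(c_{F_1}) = \WH(c_{F_2})$.

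There is essentially no hard obstacle here; the only points requiring care are the bookkeeping that $D^T F_1 D = F_2$ (which uses the precise shape $\diag(1,\ldots,1,\delta,0,\ldots,0)$ and the fact that both matrices have the same rank $k$, so the $\mu$ sits in the entry that is actually nonzero) and the remark that the hypothesis ``$\delta_2/\delta_1$ is a square'' is exactly what guarantees the scaling factor $\mu^2$ can be realized by an $\Fq$-point $\mu$. Everything else is the same rank-preserving substitution trick already used above, so I would keep the write-up short and simply invoke Proposition \ref{Prop: Diag}'s observation that $P \mapsto L^T P L$ permutes $S(r,m)$.
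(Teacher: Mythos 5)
Your proposal is correct and is essentially the paper's own argument: the paper likewise picks $b$ with $b^2\delta_1=\delta_2$, conjugates $F_1$ by $\diag(1,\ldots,1,b,1,\ldots,1)$ to obtain $F_2$, and uses the rank-preserving bijection $M\mapsto P^TMP$ of the support, exactly as you do. Your write-up is in fact slightly more careful about why $D^TF_1D=F_2$ and why the coordinate permutation preserves Hamming weight, but the route is the same.
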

\begin{proof}
By assumption, we can find a nonzero $b \in \mathbb{F}_q$ satisfying $b^2\delta_1=\delta_2$. Now consider the matrix $P=\mathrm{diag}(1,1,\ldots,1,b,1,\ldots,1)$ with $b$ at the $k^{\it th}$ position. Then $PF_1R^T=F_2$ and the map $M \mapsto P^TMP$ is a bijection between the support of $c_{F_2}$ and $c_{F_1}$. Therefore they have the same weight.
\end{proof}

	


From the above proposition, it is clear that for every $1\leq k\leq m$ there can be at most two different weights for codewords in $\C$. In other words, the weight of codewords corresponding to functions   $\fk= X_{11} + \cdots +X_{k-1k-1}+\delta X_{kk}$  for $1\leq k\leq m$ will be all possible weights of the code $\C$ and hence of the code  $\CA$. Moreover, the minimum of the weights $\wt(\fk)$ gives the minimum distance of the code $\C$. Next, we will give an iterative formula for the weight of codewords corresponding to functions $\fk$. But before that, we fix some notations for the rest of the article. First, from now on, whenever we  refer to the weight of the function $\fk$, we always mean the weight of the codeword $c_{\fk}\in\C$ corresponding to the function $\fk$.  For any $0\le k\le m$ and $0\le t\le m$, we define
$$
\supp_t(\fk):=\{A\in\Stm: \fkA\neq 0\} \text{ and }\Wktd=\left|\supp(\fk)\right|.
$$
Clearly, $\Wktd$ denotes the weight of the codeword $c_{\fk}\in \C$.
Also, for any fixed $0\le t\le m$ and   $0\le r\le t$, we define
$$
\Tkrd=\{A\in S(r, m): \fkA\neq 0\},\qquad \wkrd=|\Tkrd|.
$$
Notice that

\begin{equation}
\label{eq: restwtsum}
\supp_t(\fk)=\cup_{r=1}^t\Tkrd \text{ and }\Wktd =\sum\limits_{r=1}^t\wkrd
\end{equation}
We will call $\wkrd$ the $r^{\it th}$ restricted weight of the function $\fk$ in $\C$.

Note that for $k=0$, the function $\fk$ is the zero function,  and hence $w_0^\delta(r, m)=0$ for any $r$, and as a consequence  $W_0^\delta (t, m)=0$ for any $t$. Next, for any $B\in S(r, m)$ and $\a\in \Fq^*$, we define
$$
\lambda_B(r, m)=\left|\{vB: v\in\Fq^m\text{ and } vBv^T=0\} \right|
$$
and
$$
\gamma_{\a}(B, r, m)=\left|\{vB: v\in\Fq^m\text{ and } vBv^T+ \a=0\} \right|.
$$
(These entities are well defined since $vB=uB$ implies that $vBv^T=uBu^T$.)
Now let $1\le k\le m$ and $1\le t\le m$ be fixed. In view of equation \eqref{eq: restwtsum}, to determine the weight $\Wktd$, it is enough to determine $\wkrd$ for each $1\le r\le t$. Therefore, in the next lemma, we give a formula for $\wkrd$ in terms of $\LBrm$ and $\GBarm$.

\begin{lemma}
	\label{lemma:restwt}
For any $0\le r\le m$ and $k\ge 1$, 	the $r^{\it th}$ restricted weight of the codeword $\fk$ in $\C$ is given by

\begin{align}
\label{eq: restwt}
\wkrd &= (q-1)(q^{m-1}-q^{r-2})\mu_{m-1}(r-2)
+(q-2)q^{r-1}\mu_{m-1}(r-1) \\
 +\sum_{\mathclap{\substack{B\in S(r-1, m-1) \\ f_{k-1}^\delta(B)=0}}}\lambda_B(r-1, m-1)
& + \sum_{\mathclap{\a \in\Fq^*}}\sum_{\substack{B\in S(r-1, m-1) \nonumber\\ f_{k-1}^\delta(B)=\a}}\gamma_{\a}(B,r-1, m-1)- \sum_{\mathclap{\substack{B\in S(r, m-1) \\ f_{k-1}^\delta(B)=0}}}\lambda_B(r, m-1)  \nonumber\\
& -  \sum_{\mathclap{\a \in\Fq^*}}\sum_{\substack{B\in S(r, m-1) \\ f_{k-1}^\delta(B)=\a}}\gamma_{\a}(B,r, m-1) \nonumber
\end{align}

\end{lemma}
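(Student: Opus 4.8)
The plan is to count $\wkrd = |\Tkrd|$ by conditioning on how a rank-$r$ symmetric matrix $A \in S(r,m)$ sits relative to the last coordinate. Write $A = \begin{pmatrix} B & v \\ v^T & c \end{pmatrix}$ with $B \in \Sm[m-1]$, $v \in \Fq^{m-1}$, $c \in \Fq$; here I am splitting off the $m$-th row and column. The key observation is that $\fkA = f_{k-1}^\delta(B) + \varepsilon\, c$, where $\varepsilon \in \{0,1\}$ records whether the index $k$ equals $m$ (i.e. whether the variable $X_{mm}$ appears in $\fk$); since both terms of the iteration are evaluated at $k-1$, the cleanest way to organise this is to treat the ``new'' diagonal contribution separately and relate $\Tkrd$ inside $S(r,m)$ to the fibers over $S(r-1,m-1) \sqcup S(r,m-1)$ given by the rank of the leading block $B$. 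First I would establish the rank dichotomy: for $A$ of rank $r$, the leading $(m-1)\times(m-1)$ block $B$ has rank $r-1$ or $r$ (it cannot drop by more than one, and rank $r-2$ or below would force $\rank A \le r-1$); conversely I would count, for each fixed $B$, how many completions $(v,c)$ yield a symmetric matrix of rank exactly $r$.

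The main computation is then the completion count. For $B$ of rank $r$, $v$ must lie in the row space $\RS(B)$ (otherwise the rank jumps to $r+1$), say $v = XB$; the rank stays exactly $r$ precisely when $c = XBX^T$, giving a unique $c$ for each of the $q^{r}$ choices of $v \in \RS(B)$ — but I must be careful that distinct $X$ can give the same $v$, which is exactly why $\lambda_B$ and $\gamma_\a$ were defined to be well-defined on $\RS(B)$ rather than on $X$. For $B$ of rank $r-1$, either $v \notin \RS(B)$ (then any $c$ works and the rank is automatically $r$, contributing the ``generic'' terms $(q-1)(q^{m-1}-q^{r-2})s(r-2,m-1)$ and $(q-2)q^{r-1}s(r-1,m-1)$ after counting the complement of $\RS(B)$ and the relevant coset structure), or $v = XB \in \RS(B)$ and then the rank is $r$ iff $c \neq XBX^T$. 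In this last sub-case the condition $\fkA \neq 0$, i.e. $f_{k-1}^\delta(B) + c \neq 0$ (up to the $\varepsilon$ bookkeeping), must be intersected with $c \neq XBX^T$: if $f_{k-1}^\delta(B) = 0$ one excludes $c = 0$ and $c = XBX^T$, contributing $\lambda_B$-type counts; if $f_{k-1}^\delta(B) = \a \neq 0$ one excludes $c = -\a$ and $c = XBX^T$, contributing $\gamma_\a$-type counts. Assembling these fiber counts and summing over $B \in S(r-1,m-1)$ and $B \in S(r,m-1)$ produces exactly the four sums in \eqref{eq: restwt}, with the two sums over $S(r,m-1)$ appearing with a minus sign because they subtract off the configurations where the rank of $A$ would be $r+1$ rather than $r$ (equivalently, an inclusion–exclusion separating ``rank $\le r$'' from ``rank $\le r-1$'' at the level of the leading block).

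I would carry out the steps in this order: (1) fix the block decomposition and record $\fkA$ in terms of $f_{k-1}^\delta(B)$ and the corner entry; (2) prove the rank-drop dichotomy for the leading block; (3) count completions for $\rank B = r$ (the unique-corner case), being careful about the well-definedness that lets $\lambda_B,\gamma_\a$ be used; (4) count completions for $\rank B = r - 1$, splitting into $v \notin \RS(B)$ and $v \in \RS(B)$, and within the latter splitting on whether $f_{k-1}^\delta(B)$ is zero or a nonzero $\a$; (5) sum over all $B$ of the two relevant ranks, using \eqref{eq: SymmFix} to evaluate the generic terms, and reconcile signs via inclusion–exclusion. The hard part will be step (3)–(4): keeping the bookkeeping straight between the two ways the corner entry is constrained (``rank stays $r$'' versus ``codeword entry nonzero''), and in particular correctly identifying which completions are double-counted between the $S(r-1,m-1)$ and $S(r,m-1)$ contributions so that the subtraction in \eqref{eq: restwt} comes out exactly — this is where an off-by-a-power-of-$q$ error is easiest to make, so I would double-check it against a small case such as $m = 2$ or $m = 3$.
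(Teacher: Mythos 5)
There is a genuine gap: your rank analysis of the block decomposition is wrong, and it is wrong in a way that makes the formula unobtainable. For a symmetric $A$ of rank $r$, the principal block $B$ obtained by deleting one row and the corresponding column can have rank $r$, $r-1$, \emph{or} $r-2$ --- the drop by two does occur (e.g.\ $A=\left(\begin{smallmatrix}0&1\\1&0\end{smallmatrix}\right)$ has rank $2$ but its principal $1\times 1$ block has rank $0$). Your claimed dichotomy ``$\rank B \in\{r-1,r\}$'' is false, and the first term $(q-1)(q^{m-1}-q^{r-2})\,s(r-2,m-1)$ of \eqref{eq: restwt} comes precisely from the fibers over $B\in S(r-2,m-1)$ with $v\notin\RS(B)$: it visibly sums over matrices of rank $r-2$, so it cannot arise, as you assert, from $B$ of rank $r-1$. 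Relatedly, the completion count you give for $\rank B=r-1$, $v\notin\RS(B)$ is incorrect: adjoining a symmetric row/column with $v$ outside the row space raises the rank by \emph{two}, so such $A$ has rank $r+1$, not $r$, and these completions must be discarded rather than counted. The correct case analysis (which is what the paper does) is: $\rank B=r-2$ forces $v\notin\RS(B)$ with $c$ free; $\rank B=r-1$ forces $v=XB\in\RS(B)$ with $c\neq XBX^T$; $\rank B=r$ forces $v=XB$ with $c=XBX^T$. The minus signs on the $S(r,m-1)$ sums then come from the fiber sizes $q^r-\lambda_B$ and $q^r-\gamma_\a$ (the corner entry is determined, and one excludes the $XB$ for which $XBX^T=-f_{k-1}^\delta(B)$), not from an inclusion--exclusion between ``rank $\le r$'' and ``rank $\le r-1$''.

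A second, smaller but still fatal, issue is your choice to split off the \emph{last} row and column. Since $\fk=X_{11}+\cdots+X_{k-1,k-1}+\delta X_{kk}$, deleting the $m$-th row and column gives $\fkA=f_k^\delta(B)+[k=m]\,\delta c$, with the \emph{same} index $k$ on the block whenever $k<m$; your identity $\fkA=f_{k-1}^\delta(B)+\varepsilon c$ is not correct, and the statement's sums genuinely involve $f_{k-1}^\delta(B)$. The paper deletes the \emph{first} row and column, so that $\fkA=A_{11}+f_{k-1}^\delta(B)$ (with coefficient $\delta$ on $A_{11}$ when $k=1$, which changes nothing in the counting), and the support condition becomes the exclusion of exactly one value of the corner entry, $A_{11}\neq -f_{k-1}^\delta(B)$. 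With these two corrections --- first row/column instead of last, and the three-way rank split with the correct completion counts --- your fiber-counting strategy does become the paper's proof.
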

\begin{remark}
{\rm In the case $k=1$ this simplifies a lot, since
$f_{1-1}^{\delta}=f_0^{\delta}=0$, and moreover the weight of $f_1^{\delta}$ is equal to the weight of $f_1^{1}$                     for all $\delta$. }
\end{remark}

\begin{proof}
Consider the following map
\begin{align}
\label{eq: mappsi}
\Tkrd &\xrightarrow{\psi} S(r-2, m-1)\cup S(r-1, m-1)\cup S(r, m-1)\\
A &\mapsto  \psi(A)=B\nonumber
\end{align}
where $B$ is obtained from $A$ by deleting the first row and first column. Clearly, if $A\in S(r, m)$ then $\psi(A)=B$ is a symmetric matrix of size $m-1$ and of rank either $r-2,\; r-1$ or $r$. Therefore, the map $\psi$ is well-defined. Further, it is clear from the definition that $\wkrd =\left|\Tkrd\right|$, we use the map $\psi$ and fibers of this map to compute $\wkrd$. The main idea is to compute the cardinality  $\left| \psi^{-1}(B)\right|$ of the fibers of $\psi$ for every $B\in  S(r-2, m-1)\cup S(r-1, m-1)\cup S(r, m-1)$. We divide the counting of  $\left| \psi^{-1}(B)\right|$ into three different cases.

\noindent\textit{Case (I):} Suppose $B\in S(r-2, m-1)$. In this case,
$$
 \psi^{-1}(B)=\left\{ \begin{bmatrix}
 z & u \\
 u^T & B
 \end{bmatrix}: u\in \Fq^{m-1}\setminus \RS(B),\text{ and }z\neq -f_{k-1}^\delta(B)\right\}.
$$
Therefore, in this case, we get
\begin{equation}
\label{eq:typeI}
\left| \psi^{-1}(B)\right|= (q-1)(q^{m-1}- q^{r-2}).
\end{equation}

\noindent\textit{Case (II):} Suppose $B\in S(r-1, m-1)$. In this case, we have 	
$$
\psi^{-1}(B)= \left\{  \begin{bmatrix}
z &uB \\
Bu^T & B
\end{bmatrix}: u\in \Fq^{m-1},\; z\neq -f_{k-1}^\delta(B), \text{ and } z\neq uBu^T\right\}.
$$
The fibers in this case are divided into two classes.

\textit{Subcase (I):} Suppose $B\in S(r-1, m-1)$ and $f_{k-1}^\delta(B)=0$,  then

\begin{align*}
\psi^{-1}(B)=& \left\{  \begin{bmatrix}
z & uB \\
Bu^T & B
\end{bmatrix}: u\in \Fq^{m-1},\; uBu^T= 0, \text{ and }z\in \Fq^*  \right\}\\
\cup & \left\{  \begin{bmatrix}
z & uB \\
Bu^T & B
\end{bmatrix}: u\in \Fq^{m-1},\; uBu^T\neq 0, \text{ and }z\in \Fq^* \setminus\{uBu^T\} \right\},
\end{align*}
where $\Fq^*$ is the set set of nonzero elements of $\Fq$. Therefore, in this case, we get
\begin{align}
\label{eq:typeII1}
\left| \psi^{-1}(B)\right|&= (q-1)\lambda_B(r-1, m-1) + (q-2) (q^{r-1}-\lambda_B(r-1, m-1))\nonumber\\
&= (q-2)q^{r-1} + \lambda_B(r-1, m-1).
\end{align}

\textit{Subcase (II):} Suppose $B\in S(r-1, m-1)$ and $f_{k-1}^\delta(B)=\a$, where $\a\in \mathbb{F}_q^*$  then

\begin{align*}
\psi^{-1}(B)=& \left\{  \begin{bmatrix}
z & uB \\
Bu^T & B
\end{bmatrix}: u\in \Fq^{m-1},\; uBu^T+ \a= 0, \text{ and }z\neq -\a  \right\}\\
\cup & \left\{  \begin{bmatrix}
z & uB \\
Bu^T & B
\end{bmatrix}: u\in \Fq^{m-1},\; uBu^T+\a\neq 0, \text{ and }z\in \Fq \setminus\{-\a, uBu^T\} \right\}.
\end{align*}
Therefore, in this case, we get
\begin{align}
\label{eq:typeII2}
\left| \psi^{-1}(B)\right|&= (q-1)\gamma_\a(B, r-1, m-1) + (q-2) (q^{r-1}-\gamma_\a(B, r-1, m-1))\nonumber\\
&= (q-2)q^{r-1} + \gamma_\a(B, r-1, m-1).
\end{align}

\noindent\textit{Case (III):} Suppose $B\in S(r, m-1)$. In this case, we have 	
$$
\psi^{-1}(B)= \left\{  \begin{bmatrix}
uBu^T & uB \\
Bu^T & B
\end{bmatrix}: u\in \Fq^{m-1},\;\; uBu^T\neq -f_{k-1}^\delta(B) \right\}.
$$
The fibers in this case too are divided into two classes.

\textit{Subcase (I):} Suppose $B\in S(r, m-1)$ and $f_{k-1}^\delta(B)=0$,  then

\begin{align*}
\psi^{-1}(B)=& \left\{  \begin{bmatrix}
uBu^T & uB \\
Bu^T & B
\end{bmatrix}: u\in \Fq^{m-1},\;\; uBu^T\neq 0 \right\}.
\end{align*}
Therefore, in this case, we get
\begin{equation}
\label{eq:typeIII1}
\left| \psi^{-1}(B)\right|= (q^r- \lambda_B(r, m-1)).
\end{equation}

\textit{Subcase (II):} Suppose $B\in S(r, m-1)$ and $f_{k-1}^\delta(B)=\a$, where $\a\in \mathbb{F}_{q}^*$  then

\begin{align*}
\psi^{-1}(B)=& \left\{  \begin{bmatrix}
uBu^T & uB \\
Bu^T & B
\end{bmatrix}: u\in F_{q^{m-1}},\;\; uBu^T +\a\neq 0  \right\}.
\end{align*}
Therefore, in this case, we get
\begin{equation}
\label{eq:typeIII2}
\left| \psi^{-1}(B)\right|= (q^r- \gamma_\a(B, r, m-1)).
\end{equation}
Now, taking the map $\psi$ and its fibers into account, we get

\begin{align*}
\wkrd &= \sum_{\mathclap{\substack{B\in S(r-2, m-1) }}}\left| \psi^{-1}(B)\right| \quad + \quad\sum_{\mathclap{\substack{B\in S(r-1, m-1) }}} \left| \psi^{-1}(B)\right| \quad + \quad \sum_{\mathclap{\substack{B\in S(r, m-1) }}} \left| \psi^{-1}(B)\right|\\
&= \sum_{\mathclap{\substack{B\in S(r-2, m-1) }}}\left| \psi^{-1}(B)\right| \quad + \quad \sum_{\substack{B\in S(r-1, m-1) \\ f_{k-1}^\delta(B)=0}} \left| \psi^{-1}(B)\right| \\
&+\quad \sum\limits_{\a\in\Fq^*}  \sum_{\substack{B\in S(r-1, m-1) \\ f_{k-1}^\delta(B)=\a}} \left| \psi^{-1}(B)\right| \quad + \quad \sum_{\substack{B\in S(r, m-1) \\ f_{k-1}^\delta(B)=0}} \left| \psi^{-1}(B)\right| \\
&+\quad \sum\limits_{\a\in\Fq^*}  \sum_{\substack{B\in S(r, m-1) \\ f_{k-1}^\delta(B)=\a}} \left| \psi^{-1}(B)\right|.
\end{align*}
Using equations \eqref{eq:typeI}, \eqref{eq:typeII1}, \eqref{eq:typeII2}, \eqref{eq:typeIII1} and \eqref{eq:typeIII2}, we get

\begin{align*}
\wkrd &= (q-1)(q^{m-1}-q^{r-2}) \mu_{m-1}(r-2) + \sum_{\substack{B\in S(r-1, m-1) \\ f_{k-1}^\delta(B)=0}} ((q-2)q^{r-1} + \lambda_B(r-1, m-1))\\
&+ \sum\limits_{\a\in\Fq^*}  \sum_{\substack{B\in S(r-1, m-1) \\ f_{k-1}^\delta(B)=\a}} ((q-2)q^{r-1} + \gamma_\a(B, r-1, m-1))+\; \sum_{\substack{B\in S(r, m-1) \\ f_{k-1}^\delta(B)=0}} (q^r- \lambda_B(r, m-1)) \\
&+ \sum\limits_{\a\in\Fq^*} \;\; \sum_{\substack{B\in S(r, m-1) \\ f_{k-1}^\delta(B)=\a}} (q^r- \gamma_\a(B, r, m-1)).\\
&= (q-1)(q^{m-1}-q^{r-2})\mu_{m-1}(r-2) +\sum_{\substack{B\in S(r-1, m-1) }} (q-2)q^{r-1} \\
& + \sum_{\substack{B\in S(r-1, m-1) \\ f_{k-1}^\delta(B)=0}} \lambda_B(r-1, m-1) +  \sum\limits_{\a\in\Fq^*}\;\;\sum_{\substack{B\in S(r-1, m-1) \\ f_{k-1}^\delta(B)=\a}} \gamma_\a(B, r-1, m-1)\\
&+\sum_{\substack{B\in S(r, m-1) }} q^r -\sum_{\substack{B\in S(r, m-1) \\ f_{k-1}^\delta(B)=0}} \lambda_B(r, m-1)) \\
&-  \sum\limits_{\a\in\Fq^*} \;\; \sum_{\substack{B\in S(r, m-1) \\ f_{k-1}^\delta(B)=\a}}  \gamma_\a(B, r, m-1).
\end{align*}
After simplifying this, we get

\begin{align*}
\wkrd &= (q-1)(q^{m-1}-q^{r-2})\mu_{m-1}(r-2) + q^{r}\mu_{m-1}(r-1) \\
& -2q^{r-1}\mu_{m-1}(r-1)  + q^r\mu_{m-1}(r) + \sum_{\mathclap{\substack{B\in S(r-1, m-1) \\ f_{k-1}^\delta(B)=0}}}\lambda_B(r-1, m-1) \nonumber\\
& + \sum_{\mathclap{\a \in\Fq^*}}\;\sum_{\substack{B\in S(r-1, m-1) \nonumber\\ f_{k-1}^\delta(B)=\a}}\gamma_{\a}(B,r-1, m-1)- \sum_{\mathclap{\substack{B\in S(r, m-1) \\ f_{k-1}^\delta(B)=0}}}\lambda_B(r, m-1)  \nonumber\\
& -  \sum_{\mathclap{\a \in\Fq^*}}\;\sum_{\substack{B\in S(r, m-1) \\ f_{k-1}^\delta(B)=\a}}\gamma_{\a}(B,r, m-1).
\end{align*}
This completes the proof of the lemma.
\end{proof}	
 Now we are ready to prove one of the main theorems of this article.
 \begin{theorem}
 	\label{thm: weightoffk}
 	For any $0\le k\le m$ and $0\le t\le m$, the Hamming weight of the codeword $\fk\in\C$ is given by
 	\begin{align*}
 		\Wktd &= \theta_q(t, m) - \sum_{\mathclap{\substack{B\in S(t, m-1) \\ f_{k-1}^\delta(B)=0}}}\lambda_B(t, m-1) - \sum_{\mathclap{\a \in\Fq^*}}\;\;\sum_{\substack{B\in S(t, m-1) \\ f_{k-1}^\delta(B)=\a}}\gamma_{\a}(B,t, m-1)
 	\end{align*}
  where
  \begin{equation}
      \label{eq: theta}
      \theta_q(t, m)=(q-1)q^{m-1}\nu_{m-1}(t-2) + (q-1)q^{t-1}\mu_{m-1}(t-1) + q^t\mu_{m-1}(t)
  \end{equation}
 \end{theorem}
\begin{proof}
The proof is a simple consequence of equation 	\eqref{eq: restwtsum} and Lemma \ref{lemma:restwt}. We have
\begin{align*}
\Wktd &=\sum\limits_{r=1}^t\wkrd\\
&= \sum_{r=1}^t (q-1)q^{m-1}\mu_{m-1}(r-2)-\sum_{r=1}^t (q-1)q^{r-2}\mu_{m-1}(r-2)\\
&+\sum_{r=1}^t (q-2)q^{r-1}\mu_{m-1}(r-1) +\sum_{r=1}^t q^r \mu_{m-1}(r)\\
&+ \sum_{r=1}^t\qquad \sum_{\mathclap{\substack{B\in S(r-1, m-1) \\ f_{k-1}^\delta(B)=0}}}\lambda_B(r-1, m-1) +\sum_{r=1}^t \sum_{\mathclap{\a \in\Fq^*}}\;\sum_{\substack{B\in S(r-1, m-1) \nonumber\\ f_{k-1}^\delta(B)=\a}}\gamma_{\a}(B,r-1, m-1)\\
&- \sum_{r=1}^t\qquad\sum_{\mathclap{\substack{B\in S(r, m-1) \\ f_{k-1}^\delta(B)=0}}}\lambda_B(r, m-1) - \sum_{r=1}^t \sum_{\mathclap{\a \in\Fq^*}}\;\sum_{\substack{B\in S(r, m-1) \\ f_{k-1}^\delta(B)=\a}}\gamma_{\a}(B,r, m-1).
\end{align*}
Note that $\mu_{m-1}(0)=1$, since $S(0,m-1)$ only contains the zero matrice. This also implies that $\sum_{B \in S(0,m-1)} \lambda_B(0,m-1)=\lambda_0(0,m-1)=1$, and $\gamma_\alpha(B,0,m-1)=0$ for all non-zero $\alpha$. Now expanding the above expression in $r$, and after canceling several terms, we get

\begin{align*}
 		\Wktd &= \theta_q(t, m) - \sum_{\mathclap{\substack{B\in S(t, m-1) \\ f_{k-1}^\delta(B)=0}}}\lambda_B(t, m-1) - \sum_{\mathclap{\a \in\Fq^*}}\;\;\sum_{\substack{B\in S(t, m-1) \\ f_{k-1}^\delta(B)=\a}}\gamma_{\a}(B,t, m-1)
 	\end{align*}

This completes the proof.	
	
\end{proof}	
\begin{remark}
For $t=m$, Theorem \ref{thm: weightoffk} gives the obvious formula $f_k^{\delta}=q^{{m +1 \choose\ 2}}-q^{{m+1 \choose\ 2}-1}$, for all $k$ and $\delta$.
\end{remark}

Every $A\in S(r, m)$ corresponds to a quadratic form  $Q_A(X)=XAX^T$. Since $A$ is of rank $r$, one can change $Q_A(X)$, using a linear change of variables, into one of the following forms \cite[Theorem 5.2.4]{Hir}.
\begin{enumerate}
	\item If $r= 2t+1$, then the quadratic $Q_A(X)$ can be transformed to
	$$
	cX_1^2 + X_2X_3+\cdots + X_{2t}X_{2t+1}.
	$$
	for some non-zero constant $c$ and independent linear forms $X_i$.
	\item  If $r=2t$, then the quadratic $Q_A(X)$ can be transformed to either
	$$
	(i)\;\; c( X_1X_2 +\cdots+ X_{2t-1}X_{2t}) \text{ or }\newline
	(ii)\; c((f(X_1, X_2)+ X_3X_4+\cdots+ X_{2t-1}X_{2t})
	$$
	for some non-zero constant $c$ and independent linear forms $X_i$, and
	where $f(X_1, X_2)$ is an irreducible, homogeneous polynomial of degree $2$ (it can even be brought to the form
	$X_1^2-aX_2^2$ for $a$ a non-square).
\end{enumerate}
\begin{definition} \label{standard}
The expressions in (1) and(2)(i),(ii)  will be denoted by \emph{standard forms} of parabolic, hyperbolic, and elliptic quadratic forms, respectively. We will in general denote a symmetric matrix $A$ as being of parabolic type, hyperbolic type, or elliptic type, depending on whether the quadratic form  $Q_A(X)$ can be transformed to a parabolic,  hyperbolic, or elliptic quadratic form of standard type, as described above, respectively.
\end{definition}
 From the discussion above, it is clear that if $r$ is odd, then every matrix $A\in S(r, m)$ is of parabolic type. On the other hand, if $r$ is even, then a matrix $A\in S(r, m)$ is either of hyperbolic or elliptic type. For every $1\le r\le m/2$, we define
$$
v_{+1}(2r, m)=\left|\{A\in S(2r, m): A\text{ is hyperbolic }\}\right|
$$
 and
$$
v_{-1}(2r, m)=\left|\{A\in S(2r, m): A\text{ is elliptic }\}\right|.
$$
From \cite[Prop. 2.4]{Schmidt}, we know that
\begin{equation}
\label{eq: noofhyp}
v_{\pm 1}(2r, m)=\frac{q^r\pm 1}{2}\dfrac{\prod\limits_{i=0}^{2r-1}(q^m-q^i)}{\prod\limits_{i=0}^{r-1}(q^{2r}-q ^{2i})}
\end{equation}

Recall that, for $B\in S(r, m)$ we denote by $\lambda_B(r, m)$ the number of $XB\in\Fq^m$ such that $XBX^T=0$. From the fact that a matrix $B\in S(r,m)$ has nullity $m-r$ and from \cite[Thm 5.2.6]{Hir}, we have that
\begin{eqnarray}
\label{eq: noofsoln}
\lambda_B(r, m)= \begin{cases*}
q^{2t}, \text{ if }r=2t+1,\\
q^{2t-1} + q^{t}- q^{t-1},\; \text{ if }r=2t \text{ and B is hyperbolic},\\
q^{2t-1} - q^{t} + q^{t-1},\; \text{ if }r=2t \text{ and B is elliptic.}
\end{cases*}
\end{eqnarray}
Recall also that, for any $B\in S(r, m)$ and $\a\in\Fq^*$, we have defined $\gamma_{\a}(B, r, m)$ as the number of $uB\in\Fq^m$ such that $uBu^T +\a=0$. If $B_\a=\diag(B, \a)$, then it is not hard to see that
\begin{equation}
\label{eq: gammaalpha}
\gamma_{\a}(B, r, m)= \left|\{[u:v]B_\a\in\mathbb{P}^m:uBu^T +\a v^2=0\}\right|-\left|\{\overline{uB}\in\mathbb{P}^{m-1}:uBu^T =0\}\right|.
\end{equation}
By $\overline{uB}$ for non-zero $uB$ we mean its image in
$\mathbb{P}^{m-1}$ via multiplicative equivalence. Thus, $\gamma_{\a}(B, r,m)$ is the difference of the projective solution of the quadratic forms defined by matrix $B$, and matrix $B_\a$.
Note that, with more detailed information about $B$ and $\a$, then using equation \eqref{eq: noofsoln} we can compute the value of $\gamma_{\a}(B, r,m)$. For example, if the rank of $B$ is even, then $B_\a$ is parabolic for every $\a\in\Fq^*$  and therefore, we can find $\gamma_{\a}(B,r, m)$ if we know the nature of $B$. On the other hand, if the rank of $B$ is odd, then we can transform the quadratic form $XBX^T$ into $cX_1^2 + X_2X_3+\cdots + X_{2t}X_{2t+1}$. Now, if we know $\a\in\Fq^*$, we can know whether $X_1^2 +\a Y^2$ is reducible or irreducible (depending on whether
$-\frac{\a}{c}$ is a square or not, referring to formula
(1) above, and hence we know that $B_\a$ is hyperbolic or elliptic respectively. Therefore, we can find $\gamma_{\a}(B,r, m)$. Next, we are ready to compute the Hamming weight of the codeword $f_1^\delta(X)=\delta X_{1,1}$. Clearly, the weight of $f_1^\delta(X)=\delta X_{1,1}$ is independent of $\delta$.

\begin{theorem}
	\label{thm: WeightW1}
	The Hamming weight of the codeword $f_1^\delta(X)\in\C$ is given by
	\begin{eqnarray*}
	W_1^\delta(t, m)=\begin{cases*}
	(q-1)q^{m-1}\nu_{m-1}(t-2) + (q-1)q^{t-1}\mu_{m-1}(t-1)\\
	 + (q-1)q^{t-1}\mu_{m-1}(t),	\text{ if }t \text{ is odd },\\
	 \\
(q-1)q^{m-1}\nu_{m-1}(t-2) + (q-1)q^{t-1}\mu_{m-1}(t-1) \\
	 + (q-1)(q^{t-1}-q^{-1})\mu_{m-1}(t),	 	\text{ if }t \text{ is even }.\\
	\end{cases*}
	\end{eqnarray*}
\end{theorem}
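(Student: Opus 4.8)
The plan is to read off $W_1^\delta(t,m)$ directly from Theorem \ref{thm: weightoffk} applied with $k=1$, and then to evaluate the only genuinely new quantity that survives, namely the sum of the numbers $\lambda_B(t,m-1)$ over all $B\in S(t,m-1)$, using the explicit value of $\lambda_B$ from \eqref{eq: noofsoln} and the hyperbolic/elliptic counts from \eqref{eq: noofhyp}.

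First I would specialize Theorem \ref{thm: weightoffk} to $k=1$. Then $f_{k-1}^\delta=f_0^\delta$ is the zero function, so $f_0^\delta(B)=0$ for every $B\in S(t,m-1)$ and there is no $B$ with $f_0^\delta(B)=\a$ for $\a\in\Fq^*$. Hence the final double sum in Theorem \ref{thm: weightoffk} is empty, while the first sum runs over all of $S(t,m-1)$, giving
\[
W_1^\delta(t,m)=(q-1)q^{m-1}n_s(t-2,m-1)+(q-1)q^{t-1}s(t-1,m-1)+q^ts(t,m-1)-\Sigma,
\]
where $\Sigma:=\sum_{B\in S(t,m-1)}\lambda_B(t,m-1)$. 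So everything reduces to computing $\Sigma$ and simplifying.

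If $t$ is odd, the discussion preceding \eqref{eq: noofsoln} shows that every $B\in S(t,m-1)$ is of parabolic type, so \eqref{eq: noofsoln} gives $\lambda_B(t,m-1)=q^{t-1}$ for all such $B$, whence $\Sigma=q^{t-1}s(t,m-1)$; substituting and using $q^t-q^{t-1}=(q-1)q^{t-1}$ yields the first case of the theorem. If $t=2t'$ is even, then each $B\in S(t,m-1)$ is hyperbolic or elliptic, and \eqref{eq: noofsoln} gives $\lambda_B(t,m-1)=q^{t-1}+q^{t'}-q^{t'-1}$ in the hyperbolic case and $q^{t-1}-q^{t'}+q^{t'-1}$ in the elliptic case. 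Writing $v_{\pm1}$ for $v_{\pm1}(t,m-1)$, this gives
\[
\Sigma=q^{t-1}(v_{+1}+v_{-1})+(q^{t'}-q^{t'-1})(v_{+1}-v_{-1}).
\]
From \eqref{eq: noofhyp} one reads off $v_{+1}+v_{-1}=s(t,m-1)$ and $v_{+1}-v_{-1}=q^{-t'}s(t,m-1)$: the common product factor in the two expressions of \eqref{eq: noofhyp} equals $q^{-t'}s(t,m-1)$ when compared with \eqref{eq: SymmFix}, and the prefactors $\tfrac{q^{t'}\pm1}{2}$ sum to $q^{t'}$ and differ by $1$. Hence $\Sigma=(q^{t-1}+1-q^{-1})s(t,m-1)$, and since $q^t-q^{t-1}-1+q^{-1}=(q-1)(q^{t-1}-q^{-1})$, substitution produces the second case.

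The argument is essentially bookkeeping; the only point calling for a small verification is the pair of identities for $v_{+1}\pm v_{-1}$, which I would justify either combinatorially (the rank-$t$ symmetric matrices split into hyperbolic and elliptic ones, giving the sum, with \eqref{eq: noofhyp} supplying the difference) or by matching the product factor in \eqref{eq: noofhyp} against the closed form \eqref{eq: SymmFix} for $s(t,m-1)$. I do not expect any genuine obstacle beyond keeping the powers of $q$ straight when passing between the parametrizations $t=2t'$ and $t=2t'+1$.
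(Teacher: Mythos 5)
Your proposal is correct and follows essentially the same route as the paper's own proof: specialize Theorem \ref{thm: weightoffk} to $k=1$, observe that $f_0^\delta\equiv 0$ collapses the sums to $\Sigma=\sum_{B\in S(t,m-1)}\lambda_B(t,m-1)$, and evaluate $\Sigma$ via the parabolic case for odd $t$ and via $v_{+1}\pm v_{-1}$ for even $t$. The identities $v_{+1}+v_{-1}=s(t,m-1)$ and $v_{+1}-v_{-1}=q^{-t/2}s(t,m-1)$ are exactly the ones the paper invokes, and your final simplifications agree with its stated formulas.
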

\begin{proof}
	From Theorem \ref{thm: weightoffk}, we have
		\begin{align*}
	W_1^\delta(t, m) &= (q-1)q^{m-1}\nu_{m-1}(t-2) + (q-1)q^{t-1}\mu_{m-1}(t-1) + q^t\mu_{m-1}(t-1)\\\nonumber
	& - \sum_{\mathclap{\substack{B\in S(t, m-1) \\ f_{0}^\delta(B)=0}}}\lambda_B(t, m-1) - \sum_{\mathclap{\a \in\Fq^*}}\;\;\sum_{\substack{B\in S(t, m-1) \\ f_{0}^\delta(B)=\a}}\gamma_{\a}(B,t, m-1)
	\end{align*}
Note that, $f_0^\delta(X)$	is the zero function and therefore for every $1\le t\le m$ and $B\in S(t, m-1)$, we have $f_{0}^\delta(B)=0$. Also, in view of the above expression, it is clear that to determine $	W_1^\delta(t, m)$, it is enough to determine
$$
\Lambda(t, m-1)=\sum_{\mathclap{\substack{B\in S(t, m-1) \\ f_{0}^\delta(B)=0}}}\lambda_B(t, m-1).
$$
Note that the sum in the expression for $\Lambda(t, m-1)$ is over the set $S(t, m-1)$, since for any $B\in S(t, m-1)$, we have $ f_{0}^\delta(B)=0$. Now we divide the proof  into two cases.
\begin{enumerate}
	\item[Case 1.] If $t$ is odd: In this case, every matrix $B\in S(2t+1, m-1)$ is parabolic and therefore, from equation \eqref{eq: noofsoln}, we get
	$\lambda_B(t, m-1)= q^{t-1}$ for every $B\in S(t, m-1)$. Therefore,  we get
	$$
	\Lambda(t, m-1)=\mu_{m-1}(t) q^{t-1}.
	$$
\end{enumerate}
Putting this value in the expression for $	W_1^\delta(t, m)$, we have
	\begin{align*}
W_1^\delta(t, m) &= (q-1)q^{m-1}\nu_{m-1}(t-2) + (q-1)q^{t-1}\mu_{m-1}(t-1)\\
&  + q^t\mu_{m-1}(t)- \mu_{m-1}(t) q^{t-1}\\
&= (q-1)q^{m-1}\nu_{m-1}(t-2) + (q-1)q^{t-1}\mu_{m-1}(t-1)\\
& + (q-1)q^{t-1}\mu_{m-1}(t).
\end{align*}
\item[ Case 2.]If $t$ is even: In this case, there are $v_{+1}(t, m-1)$ hyperbolic matrices and $v_{-1}(t, m-1)$ elliptic matrices $B\in S(t, m-1)$. From \eqref{eq: noofsoln}, we know that if $B\in S(t, m-1)$ is hyperbolic, then $\lambda_B(t, m-1)= q^{t-1}+ q^{t/2}- q^{t/2 -1}$ and if if $B\in S(t, m-1)$ is elliptic, then $\lambda_B(t, m-1)= q^{t-1}- q^{t/2} + q^{t/2 -1}$. Therefore, we get
	\begin{align*}
W_1^\delta(t, m) &= (q-1)q^{m-1}\nu_{m-1}(t-2) + (q-1)q^{t-1}\mu_{m-1}(t-1) + q^t\mu_{m-1}(t)\\&
- v_{+1}(t, m-1)(q^{t-1}+ q^{t/2}- q^{t/2 -1}) -  v_{-1}(t, m-1)(q^{t-1}- q^{t/2}+ q^{t/2 -1})\\
&= (q-1)q^{m-1}\nu_{m-1}(t-2) + (q-1)q^{t-1}\mu_{m-1}(t-1) + q^t\mu_{m-1}(t)\\
& - q^{t-1} \mu_{m-1}(t) - (q^{t/2}- q^{t/2 -1})(v_{+1}(t, m-1) - v_{-1}(t, m-1))\\
&= (q-1)q^{m-1}\nu_{m-1}(t-2) + (q-1)q^{t-1}\mu_{m-1}(t-1)\\
& + (q-1)q^{t-1}\mu_{m-1}(t) -(q-1)q^{t/2 -1}\dfrac{\prod\limits_{i=0}^{t-1}(q^{m-1}-q^i)}{\prod\limits_{i=0}^{t/2-1}(q^{t}-q ^{2i})}
\end{align*}
where we are using the the facts that $v_{+1}(t, m-1) + v_{-1}(t, m-1)= s(t, m-1)$ and  $v_{+1}(t, m-1) - v_{-1}(t, m-1)= \frac{\prod\limits_{i=0}^{t-1}(q^{m-1}-q^i)}{\prod\limits_{i=0}^{t/2-1}(q^{t}-q ^{2i})}=q^{\frac{-t}{2}}s(t,m-1)$. This completes the proof of the theorem.
\end{proof}
%
%
%
%

\section{The minimum weight of the codes codes $\C$ and $\CA$ }
In this section, we determine the minimum distance of  codes $\C$ and hence of codes $\CA$ in the even rank cases. For convenience, we choose to denote the rank by $2t$, and not $t$, then. In order to prove the main theorem of the section, we need to introduce the following notations that are going to be useful. For any $0\le 2t\le m$, and $1\le k\le m$, we define sets
\begin{align*}
P_k^{\delta}(2t-1, m) &=\{B\in S(2t-1, m): f_k^\delta(B)=0\},\\
H_k^{\delta}(2t, m) & =\{B\in S(2t, m): B\text{ is hyperbolic and }f_k^\delta(B)=0\}, \text{ and }\\
E_k^{\delta}(2t, m) &=\{B\in S(2t, m):B\text{ is elliptic and } f_k^\delta(B)=0\}.
\end{align*}

Set
$$
p_k^{\delta}(2t-1, m)=|P_k^{\delta}(2t-1, m)|,\; h_k^{\delta}(2t, m)=|H_k^{\delta}(2t, m)|,\text{ and } e_k^{\delta}(2t, m)=|E_k^{\delta}(2t, m)|.
$$

Note that $h_0^{\delta}(2t, m)=v_{+1}(2t, m)$ and  $e_0^{\delta}(2t, m)=v_{-1}(2t, m)$. We want to prove that $W_1^\delta(2t, m)$ is the minimum distance of the code $\CAt$. The next theorem gives a sufficient condition for the minimality of the weight $W_1^\delta(2t, m)$.

\begin{theorem}
	\label{thm: wtdiff}
	For any $1\leq k\leq m$, the difference $W_k^\delta(2t, m)- W_1^\delta(2t, m)$ of the weights $W_k^\delta(2t, m)$ and $W_1^\delta(2t, m)$  is given by
	$$
	 q^t\left( ( v_{+1}(2t, m-1)- v_{-1}(2t, m-1))- (h_{k-1}^{\delta}(2t, m-1)- e_{k-1}^{\delta}(2t, m-1))\right).
	$$
\end{theorem}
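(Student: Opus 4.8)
The plan is to compute $W_k^\delta(2t,m)$ from Theorem \ref{thm: weightoffk} (with $t$ replaced by $2t$) and subtract the expression for $W_1^\delta(2t,m)$ obtained in Theorem \ref{thm: WeightW1} (even case). Writing both out, the first three summands
$$(q-1)q^{m-1}n_s(2t-2,m-1)+(q-1)q^{2t-1}s(2t-1,m-1)+q^{2t}s(2t,m-1)$$
are identical in $W_k^\delta$ and in the unsimplified form of $W_1^\delta$, so they cancel. What survives is the difference of the two ``correction'' sums
$$
\Lambda_k:=\sum_{\substack{B\in S(2t,m-1)\\ f_{k-1}^\delta(B)=0}}\lambda_B(2t,m-1)+\sum_{\a\in\Fq^*}\sum_{\substack{B\in S(2t,m-1)\\ f_{k-1}^\delta(B)=\a}}\gamma_\a(B,2t,m-1),
$$
so that $W_k^\delta(2t,m)-W_1^\delta(2t,m)=\Lambda_1-\Lambda_k$ (here $\Lambda_1$ is the $k=1$, i.e.\ $f_0^\delta\equiv 0$, instance, where only the first sum is present and equals $q^{2t-1}s(2t,m-1)-(q^{t}-q^{t-1})(v_{+1}-v_{-1})$ over $S(2t,m-1)$).

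First I would show that $\Lambda_k$ actually does \emph{not} depend on how the ranks split among the fibers with $f_{k-1}^\delta(B)=\a$: the key point is that for $B$ of \emph{even} rank $2t$, the matrix $B_\a=\diag(B,\a)$ is always parabolic regardless of $\a$, so by \eqref{eq: gammaalpha} and \eqref{eq: noofsoln} the quantity $\lambda_B(2t,m-1)+\gamma_\a(B,2t,m-1)$ depends on $B$ only through whether $B$ is hyperbolic or elliptic, and not on $\a$ — call these common values $\Phi_{+1}$ and $\Phi_{-1}$. Hence
$$
\Lambda_k=\Phi_{+1}\,h_{k-1}^\delta(2t,m-1)+\Phi_{-1}\,e_{k-1}^\delta(2t,m-1)+\big(\text{terms over }f_{k-1}^\delta(B)\neq 0\big),
$$
and a short count using $h_{k-1}^\delta+e_{k-1}^\delta+\#\{f_{k-1}^\delta(B)\neq0\}=s(2t,m-1)$ collapses everything to
$$
\Lambda_k=C\cdot s(2t,m-1)+D\big(h_{k-1}^\delta(2t,m-1)-e_{k-1}^\delta(2t,m-1)\big)
$$
for constants $C,D$ independent of $k$; explicitly one finds $D=-(q^t-q^{t-1})$ after reconciling $\lambda$ and $\gamma$ via the parabolic evaluation in \eqref{eq: noofsoln}. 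Subtracting the $k$-th instance from the $k=1$ instance, the $C\cdot s(2t,m-1)$ terms cancel and we are left with
$$
W_k^\delta(2t,m)-W_1^\delta(2t,m)=D\Big(\big(h_0^\delta-e_0^\delta\big)-\big(h_{k-1}^\delta-e_{k-1}^\delta\big)\Big)(2t,m-1),
$$
and since $h_0^\delta(2t,m-1)=v_{+1}(2t,m-1)$, $e_0^\delta(2t,m-1)=v_{-1}(2t,m-1)$, it remains only to check that the constant $D$ equals $q^t$ up to the bookkeeping, i.e.\ to verify the precise arithmetic $-(q^{t}-q^{t-1})\cdot(\pm 1)$-type contributions actually assemble into the stated factor $q^t$. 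That final reconciliation — tracking exactly which multiples of $q^{t}$, $q^{t-1}$ come from $\lambda_B$ versus from $\gamma_\a(B,\cdot)$ via \eqref{eq: gammaalpha} and the count of projective points on $X_1^2-aX_2^2$-type quadrics — is the one genuinely fiddly step, and is where I expect the main obstacle to lie; everything else is cancellation.

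I should note that a subtlety to handle carefully is that $\gamma_\a(B,2t,m-1)$ in \eqref{eq: gammaalpha} is written as a difference of two projective point counts, so when I claim ``$\lambda_B+\gamma_\a$ is independent of $\a$'' I must make sure the affine-versus-projective conversion is done consistently with the definition $\lambda_B(r,m)=|\{XB:XBX^T=0\}|$ (an affine count including $XB=0$). This is a routine but error-prone translation; doing it once cleanly for hyperbolic $B$ and once for elliptic $B$, using that $B_\a$ is parabolic of rank $2t+1$, gives the two constants $\Phi_{\pm1}$ and closes the argument.
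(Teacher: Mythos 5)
Your strategy is exactly the paper's: subtract the two instances of Theorem \ref{thm: weightoffk}, observe that the first three terms cancel, split the surviving sums over $S(2t,m-1)$ by hyperbolic/elliptic type and by whether $f_{k-1}^\delta(B)$ vanishes, and use that $B_\alpha$ is parabolic for every $\alpha\neq 0$ when $B$ has even rank, so that $\gamma_\alpha(B,2t,m-1)$ depends only on the type of $B$ and not on $\alpha$. The argument does close, but the two constants you deferred are not the ones you guessed. First, $\Lambda_1=\sum_B\lambda_B(2t,m-1)=q^{2t-1}s(2t,m-1)+(q^t-q^{t-1})\bigl(v_{+1}(2t,m-1)-v_{-1}(2t,m-1)\bigr)$; the sign of the second term in your parenthetical is flipped. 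Second, writing $\Lambda_k=(\text{$k$-independent})+D\bigl(h_{k-1}^\delta(2t,m-1)-e_{k-1}^\delta(2t,m-1)\bigr)$, one has $\gamma_{\mathrm{hyp}}=q^{t-1}(q^t-1)$ and $\gamma_{\mathrm{ell}}=q^{t-1}(q^t+1)$, hence $D=\lambda_{\mathrm{hyp}}-\gamma_{\mathrm{hyp}}=(q^{2t-1}+q^t-q^{t-1})-(q^{2t-1}-q^{t-1})=q^t$ and symmetrically $\lambda_{\mathrm{ell}}-\gamma_{\mathrm{ell}}=-q^t$; so $D=q^t$, not $-(q^t-q^{t-1})$. (Also, the $k$-independent part is $v_{+1}\gamma_{\mathrm{hyp}}+v_{-1}\gamma_{\mathrm{ell}}=q^{2t-1}s(2t,m-1)-q^{t-1}(v_{+1}-v_{-1})$, which is not a pure multiple of $s(2t,m-1)$, but this is harmless since it cancels in $\Lambda_1-\Lambda_k$.) With these corrections your scheme gives $\Lambda_1-\Lambda_k=q^t\bigl((v_{+1}-v_{-1})-(h_{k-1}^\delta-e_{k-1}^\delta)\bigr)$ evaluated at $(2t,m-1)$, which is the statement.
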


\begin{proof}
Recall that, from Theorem \ref{thm: weightoffk}, we have
\begin{align*}
 		W_k^\delta(2t, m) &= \theta_q(t, m) - \sum_{\mathclap{\substack{B\in S(t, m-1) \\ f_{k-1}^\delta(B)=0}}}\lambda_B(t, m-1) - \sum_{\mathclap{\a \in\Fq^*}}\;\;\sum_{\substack{B\in S(t, m-1) \\ f_{k-1}^\delta(B)=\a}}\gamma_{\a}(B,t, m-1)
 	\end{align*}
  where $\theta_q(t, m)$ is given by equation \eqref{eq: theta}. Thus, $W_k^\delta(2t, m) - W_1^\delta(2t, m)$ is equal to
\begin{align*}
& \sum_{\mathclap{\substack{B\in S(2t, m-1) \\ f_{0}^\delta(B)=0}}}\lambda_B(2t, m-1) \quad+ \quad\sum_{\mathclap{\a \in\Fq^*}}\sum_{\substack{B\in S(2t, m-1) \\ f_{0}^\delta(B)=\a}}\gamma_{\a}(B,2t, m-1)\\
& - \sum_{\mathclap{\substack{B\in S(2t, m-1) \\ f_{k-1}^\delta(B)=0}}}\lambda_B(2t, m-1)\quad - \quad\sum_{\mathclap{\a \in\Fq^*}}\;\;\sum_{\substack{B\in S(2t, m-1) \\ f_{k-1}^\delta(B)=\a}}\gamma_{\a}(B,2t, m-1)\\
&=  \sum_{\mathclap{\substack{B\in S(2t, m-1) \\ B\text{ is hyperbolic}}}}\lambda_B(2t, m-1) \quad+\quad \sum_{\mathclap{\substack{B\in S(2t, m-1) \\ B\text{ is elliptic}}}}\lambda_B(2t, m-1) \quad- \quad\sum_{\mathclap{\substack{B\in S(2t, m-1) \\ B\text{ hyp.; } f_{k-1}^\delta(B)=0}}}\lambda_B(2t, m-1) \\
& - \quad\sum_{\mathclap{\substack{B\in S(2t, m-1) \\ B\text{ ellip.; } f_{k-1}^\delta(B)=0}}}\lambda_B(2t, m-1) \quad-\quad \sum_{\mathclap{\substack{B\in S(2t, m-1) \\ B\text{ hyp.; } f_{k-1}^\delta(B)\neq 0}}}\gamma_{f_{k-1}^\delta(B)}(B,2t, m-1)\\\nonumber
& - \quad \sum_{\mathclap{\substack{B\in S(2t, m-1) \\ B\text{ ellip.; } f_{k-1}^\delta(B)\neq 0}}}\gamma_{f_{k-1}^\delta(B)}(B,2t, m-1)\\
 &=  v_{+1}(2t, m-1) (q^{2t-1}+ q^t- q^{t-1}) + v_{-1}(2t, m-1) (q^{2t-1}- q^t+ q^{t-1})\\
 & - h_{k-1}^{\delta}(2t, m-1)(q^{2t-1}+ q^t- q^{t-1}) - e_{k-1}^{\delta}(2t, m-1)(q^{2t-1}- q^t+ q^{t-1})\\
 &- (v_{+1}(2t, m-1)- h_{k-1}^{\delta}(2t, m-1)) q^{t-1}(q^t-1)\\
  &- (v_{-1}(2t, m-1)- e_{k-1}^{\delta}(2t, m-1)) q^{t-1}(q^t+1).
\end{align*}
Simplifying the above expression, we get the desired result.
	
\end{proof}	
The next corollary then gives a necessary and sufficient condition for $W_1^\delta(2t, m) $ to be the minimum weight of the code $\CAt$.
\begin{corollary}
	\label{cor: wtdiff}
	The weight $W_1^\delta(2t, m) $ is the minimum weight of the code $\CAt$ iff
	$$
	v_{+1}(2t, m-1)- v_{-1}(2t, m-1)\geq h_{k-1}^{\delta}(2t, m-1)- e_{k-1}^{\delta}(2t, m-1).
	$$
\end{corollary}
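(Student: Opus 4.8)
The plan is to read off the corollary directly from Theorem \ref{thm: wtdiff}. That theorem expresses the weight difference as
$$
W_k^\delta(2t,m) - W_1^\delta(2t,m) = q^t\bigl((v_{+1}(2t,m-1)-v_{-1}(2t,m-1)) - (h_{k-1}^\delta(2t,m-1)-e_{k-1}^\delta(2t,m-1))\bigr),
$$
and since $q^t>0$, this difference is nonnegative precisely when $v_{+1}(2t,m-1)-v_{-1}(2t,m-1) \geq h_{k-1}^\delta(2t,m-1)-e_{k-1}^\delta(2t,m-1)$. So the first step is to recall from the discussion following Proposition \ref{squares} that the only weights occurring in $\CAt$ are the $W_k^\delta(2t,m)$ for $1\le k\le m$ (together with the irrelevant $W_0^\delta=0$, which corresponds to the zero codeword and does not count toward the minimum distance). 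Consequently, $W_1^\delta(2t,m)$ is the minimum weight of the code if and only if $W_k^\delta(2t,m)\ge W_1^\delta(2t,m)$ for every $k$ with $1\le k\le m$.

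Next I would combine these two observations. For the ``if'' direction: assume the inequality $v_{+1}(2t,m-1)-v_{-1}(2t,m-1)\ge h_{k-1}^\delta(2t,m-1)-e_{k-1}^\delta(2t,m-1)$ holds for all $k$ in the range; then by Theorem \ref{thm: wtdiff} each difference $W_k^\delta(2t,m)-W_1^\delta(2t,m)$ is $\ge 0$, so $W_1^\delta(2t,m)$ is minimal among the nonzero weights, hence it is the minimum distance. For the ``only if'' direction: if $W_1^\delta(2t,m)$ is the minimum weight, then in particular $W_k^\delta(2t,m)\ge W_1^\delta(2t,m)$ for each $k$, and dividing the formula in Theorem \ref{thm: wtdiff} by $q^t>0$ yields the stated inequality for every such $k$. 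One subtlety worth flagging explicitly: for $k=1$ the inequality in the corollary becomes $v_{+1}(2t,m-1)-v_{-1}(2t,m-1)\ge h_0^\delta(2t,m-1)-e_0^\delta(2t,m-1)$, which holds with equality since $h_0^\delta(2t,m-1)=v_{+1}(2t,m-1)$ and $e_0^\delta(2t,m-1)=v_{-1}(2t,m-1)$; so the $k=1$ case is automatic and imposes no constraint, consistent with $W_1^\delta(2t,m)-W_1^\delta(2t,m)=0$.

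There is essentially no obstacle here: the corollary is a formal restatement of Theorem \ref{thm: wtdiff} combined with the finiteness of the weight set. The only care needed is in the logical bookkeeping — making sure the quantifier over $k$ is handled correctly in both directions (the corollary as stated should be read as ``for all $1\le k\le m$'', matching the universally-quantified hypothesis of the theorem), and noting that excluding the zero codeword is legitimate when speaking of minimum distance. The genuine work lies not in this corollary but in the subsequent analysis needed to verify when the inequality $v_{+1}(2t,m-1)-v_{-1}(2t,m-1)\ge h_{k-1}^\delta(2t,m-1)-e_{k-1}^\delta(2t,m-1)$ actually holds, i.e.\ in bounding $h_{k-1}^\delta-e_{k-1}^\delta$ — but that is the content of the later theorems, not of this corollary.
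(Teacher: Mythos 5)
Your proof is correct and follows essentially the same route as the paper: the corollary is read off directly from Theorem \ref{thm: wtdiff} by dividing by $q^t>0$ and using the fact that the $W_k^\delta(2t,m)$ exhaust the nonzero weights of the code. Your additional remarks on the quantification over $k$ and on the $k=1$ case being automatic are sensible bookkeeping that the paper leaves implicit.
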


\begin{proof}
The corollary follows from the last theorem as 	
\begin{align*}
&W_k^\delta(2t, m) - W_1^\delta(2t, m)\geq 0\\
&\iff  	v_{+1}(2t, m-1)- v_{-1}(2t, m-1)\geq h_{k-1}^{\delta}(2t, m-1)- e_{k-1}^{\delta}(2t, m-1).
\end{align*}
	
\end{proof}	

It is clear from Corollary \ref{cor: wtdiff}, that to prove that  $W_1^\delta(2t, m)$ is the minimum distance of  $\CAt$, we need to estimate $h_{k-1}^{\delta}(2t, m-1)- e_{k-1}^{\delta}(2t, m-1)$, for arbitrary $k$, $t$, and $m.$ We will do this by obtaining an  upper bound for $h_{k-1}^{\delta}(2t, m-1)$  and a lower bound for $e_{k-1}^{\delta}(2t, m-1)$. For practical, notational reasons we will replace $m-1$ by $m$ in the following, and find an upper bound for $h_{k}^{\delta}(2t, m)$ and a lower bound for $e_{k}^{\delta}(2t, m)$ and compare the difference $h_{k}^{\delta}(2t, m)- e_{k}^{\delta}(2t, m)$ with   $	v_{+1}(2t, m)- v_{-1}(2t, m)$.  To get these bounds, we consider the following restriction of the map $\psi$ defined in equation \eqref{eq: mappsi} to the sets
$H_k^{\delta}(2t, m)$ and  $E_k^{\delta}(2t, m)$ to obtain maps $\phi_1$ and $\phi_2$:
\begin{align}
\label{eq: restmapchiI}
H_{k}^\delta(2t, m) &\xrightarrow{\phi_1} S(2t-2, m-1)\cup S(2t-1, m-1)\cup S(2t, m-1)\\
A &\mapsto  \phi_1(A)=B\nonumber
\end{align}
 and
 \begin{align}
 \label{eq: restmapchiII}
 E_{k}^\delta(2t, m) &\xrightarrow{\phi_2} S(2t-2, m-1)\cup S(2t-1, m-1)\cup S(2t, m-1)\\
 A &\mapsto  \phi_2(A)=B\nonumber
 \end{align}
where $B=\phi_i(A)$ for $i=1, 2$ is obtained from $A$ by deleting the first row and first column. Note that $\phi_i$ is just the restriction of the map $\psi$ defined in equation \eqref{eq: mappsi} to appropriate subsets. But we use different notations only to make things simpler and to not worry much about the domain where the map $\psi$ is getting restricted.
The idea is to use these maps and their fibers to estimate $ h_k^\delta(2t, m)$ from above and $e_k^\delta(2t, m)$ from below.  We count the fibers of these maps in several lemmas. In the  next lemma, we give the structure of the fibers $\phi_1^{-1}(B)$ for $B\in S(2t-2, m-1)$ and determine $|\phi_1^{-1}(B)|$ for such $B$.

\begin{lemma}
	\label{lemma: fiberIhyp}
	Let $B\in S(2t-2, m-1)$ and $A\in S(2t, m)$ be such that by deleting the first row and first column of $A$ we get $B$. Then $A$ is hyperbolic iff $B$ is hyperbolic. Further, If $B\in S(2t-2, m-1)$, then
	$$
	|\phi_1^{-1}(B)|=\begin{cases*}
	q^{m-1}-q^{2t-2}, \text{ if B is hyperbolic,} \\
	0, \quad\quad\quad\quad\quad\text{ otherwise.}
	\end{cases*}
	$$
\end{lemma}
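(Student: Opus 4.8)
The plan is to compute the fibres of $\phi_1$ by reusing the description of the fibres of $\psi$ over $S(2t-2,m-1)$ obtained in Case (I) of the proof of Lemma~\ref{lemma:restwt}, and then imposing the two extra conditions that cut out $H_k^\delta(2t,m)$ inside $S(2t,m)$: hyperbolicity of $A$, and $f_k^\delta(A)=0$.

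First I would settle the rank bookkeeping. Write $A=\begin{bmatrix} z & Y\\ Y^T & B\end{bmatrix}$ with $z\in\Fq$ and $Y\in\Fq^{m-1}$. If $Y\in\RS(B)$, say $Y^T=Bw$, then subtracting $w^T$ times the last $m-1$ rows of $A$ from its first row (and symmetrically for the columns) shows $A$ is congruent to $\diag(z-w^TBw,\,B)$, of rank at most $\rank(B)+1=2t-1$; hence $\psi(A)=B\in S(2t-2,m-1)$ forces $Y\notin\RS(B)$. Conversely I claim that for any $z$ and any $Y\notin\RS(B)$ the matrix $A$ has rank exactly $2t$ and is hyperbolic precisely when $B$ is. To see this, bring $B$ to the shape $\diag(D,0)$ by a congruence, with $D$ invertible of size $2t-2$ and of the same type as $B$; the hypothesis $Y\notin\RS(B)$ says the part of the transformed vector $Y$ lying in the zero block of $B$ is nonzero, so a congruence confined to that block normalizes this part to the first basis vector of the block, and an elementary congruence then clears the remaining coordinates of $Y$. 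This exhibits $A$ as congruent to $\begin{bmatrix} z & 1\\ 1 & 0\end{bmatrix}\oplus D\oplus 0$; the $2\times 2$ block is a hyperbolic plane, so $\rank(A)=2+(2t-2)=2t$ and $Q_A$ is the orthogonal sum of a hyperbolic plane with $Q_D$. Since a hyperbolic plane has discriminant $-1$, adjoining it multiplies the discriminant by $-1$ while raising the rank by $2$ — equivalently, it raises the Witt index by one — so the type of $A$ equals that of $D$, which equals that of $B$. This proves the first assertion.

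For the fibre count, observe that $A_{ii}=B_{i-1,i-1}$ for $i\ge 2$, so, exactly as in Case (I) of the proof of Lemma~\ref{lemma:restwt}, the condition $f_k^\delta(A)=0$ is equivalent to $z=-f_{k-1}^\delta(B)$; it fixes $z$ and imposes nothing on $Y$. Combining with the previous paragraph: if $B$ is elliptic then no $A$ with $\psi(A)=B$ is hyperbolic, so $\phi_1^{-1}(B)=\emptyset$; if $B$ is hyperbolic then $\phi_1^{-1}(B)$ consists precisely of the matrices $\begin{bmatrix} -f_{k-1}^\delta(B) & Y\\ Y^T & B\end{bmatrix}$ with $Y\in\Fq^{m-1}\setminus\RS(B)$, and there are $q^{m-1}-q^{2t-2}$ of these since $\dim\RS(B)=2t-2$. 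This is the asserted formula.

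The step I expect to be the main obstacle is the type-preservation claim. One must run the chain of congruences on $A$ carefully, so that normalizing $Y$ and clearing its cross-terms really leaves both the $D$-block and the new $\begin{bmatrix} z & 1\\ 1 & 0\end{bmatrix}$ plane intact, and then invoke the effect of adjoining a hyperbolic plane on the hyperbolic/elliptic invariant, either via Witt-index additivity or via a direct discriminant computation using the normal forms recalled before Definition~\ref{standard}. Everything else is the same fibre bookkeeping already used for $\psi$, now with the single linear condition $f_k^\delta(A)=0$ absorbing the free coordinate $z$ and the hyperbolicity requirement being automatic or vacuous according to the type of $B$.
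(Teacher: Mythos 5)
Your argument is correct and follows essentially the same route as the paper: both proofs normalize $B$ by a congruence, use the condition $Y\notin\RS(B)$ to split off a rank-two hyperbolic plane $\begin{bmatrix} z & 1\\ 1 & 0\end{bmatrix}$ orthogonal to the $D$-block (so the hyperbolic/elliptic type of $A$ equals that of $B$ regardless of $z$), and then count the fibre as the $q^{m-1}-q^{2t-2}$ choices of $Y\notin\RS(B)$ with $z=-f_{k-1}^\delta(B)$ pinned down by the linear condition. Your explicit reduction to the orthogonal sum $\begin{bmatrix} z & 1\\ 1 & 0\end{bmatrix}\oplus D\oplus 0$ is, if anything, a slightly cleaner rendering of the paper's type-preservation step.
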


\begin{proof}
Let $B\in S(2t-2, m-1)$	and $L\in GL_{m-1}(\Fq)$ be a matrix such that $LBL^T=D$ is on the standard hyperbolic or elliptic form described in Definition \ref{standard}. Let
\[
A=\begin{bmatrix}
\a & u \\
u^T & B
\end{bmatrix}.
\]
If we take
$$
M_1=\begin{bmatrix}
1 & 0 \\
0^T & L
\end{bmatrix},
$$
then
$$
M_1AM_1^T=\begin{bmatrix}
\a & z \\
z^T & D
\end{bmatrix},
$$	
where $z=uL^T$. Note that $\rank A= \rank B +2$ if and only if $z\notin \RS(D)$. If $A\in S(2t, m)$, we may write $z= yD + b$, where $b= (0,\ldots, 0, b_{2t-1},\ldots, b_{m-1})$ and $b\neq 0$, i.e., $b_i\neq 0$ for some $i$. Now take
$$
M_2=\begin{bmatrix}
1 & -y-b \\
0^T & I
\end{bmatrix}.
$$
Then we have
\begin{eqnarray*}
M_2(M_1AM_1^T)M_2^T &= \begin{bmatrix}
	\a & b \\
	b^T & D
\end{bmatrix}\\
&=\left[\begin{array}{@{}c|c|c@{}}
	\a& 0 & c\\\hline
	0^T & D_1& 0\\\hline
	c& 0& 0
	 \end{array}\right]
\end{eqnarray*}
for some nonzero $c$. Here $D_1$ is obtained by writing $D$ as a block diagonal matrix with a bottom diagonal zero.  Note that $A$ and $M_2(M_1AM_1^T)M_2^T$ correspond to the same standard form of quadratics, up to constant $c$ (Definition \ref{standard}). In fact, the quadratic form corresponding to $M_2(M_1AM_1^T)M_2^T$ can be obtained from the quadratic form corresponding to $A$ by  a linear change of variables. The quadratic form corresponding to $A$ and  $B$ have the same standard form also (up to $c$). That is: $A$ is hyperbolic if and only $B$ is hyperbolic. This is true since the difference between the two quadratic forms corresponding to $D_1$ and the full matrix is of the form
$(\alpha +2c)XY$ for two independent linear forms $X$ and $Y$ that are linearly independent of those linear forms appearing in $D_1$. Now let $B\in S(2t-2, m-1)$ be hyperbolic with $f_{k-1}^\delta(B)=\a$, then
$$
\phi_1^{-1}(B)=\left\{\begin{bmatrix}
-\a & u \\
u^T & B
\end{bmatrix}: u\notin\RS(B)\right\}.
$$
Consequently,
	$$
|\phi_1^{-1}(B)|= q^{m-1}-q^{2t-2}.
$$
\end{proof}


\begin{corollary}
	\label{cor: fiberIell}
	Let $B\in S(2t-2, m-1)$ and $A\in S(2t, m)$ be such that by deleting the first row and first column of $A$ we get $B$. Then $A$ is elliptic iff $B$ is elliptic. Further, for $B\in S(2t-2, m-1)$,
	
		$$
	|\phi_2^{-1}(B)|=\begin{cases*}
		q^{m-1}-q^{2t-2}, \text{ if B is elliptic,} \\
		0,\quad\quad\quad\quad\quad \text{ otherwise.}
	\end{cases*}
$$
\end{corollary}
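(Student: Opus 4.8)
The plan is to derive Corollary \ref{cor: fiberIell} as a direct parallel to Lemma \ref{lemma: fiberIhyp}, exploiting the fact that the argument in that lemma never really used the word \emph{hyperbolic} except as a label for one of the two even-rank types. Concretely, I would start from an arbitrary $A\in S(2t,m)$ whose deletion of the first row and column yields $B\in S(2t-2,m-1)$, write $A=\begin{bmatrix}\a & u\\ u^T & B\end{bmatrix}$, pick $L\in \GL_{m-1}(\Fq)$ bringing $B$ to its standard hyperbolic or elliptic form $D$, and conjugate $A$ by $M_1=\begin{bmatrix}1 & 0\\ 0^T & L\end{bmatrix}$ and then by an $M_2$ of the form $\begin{bmatrix}1 & -y-b\\ 0^T & I\end{bmatrix}$ exactly as in the proof of Lemma \ref{lemma: fiberIhyp}. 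This produces a block matrix $\left[\begin{array}{@{}c|c|c@{}}\a & 0 & c\\\hline 0^T & D_1 & 0\\\hline c & 0 & 0\end{array}\right]$ with $c\neq 0$, and the same analysis of standard forms shows that the quadratic form of $A$ has the same type (hyperbolic vs.\ elliptic) as that of $B$, since the correction term is $(\a+2c)XY$ in two fresh independent variables, which only contributes a hyperbolic plane and hence does not change the elliptic/hyperbolic character of the remaining part $D_1$.

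Having established that $A$ is elliptic if and only if $B$ is elliptic, I would then count $|\phi_2^{-1}(B)|$. If $B$ is \emph{not} elliptic (i.e.\ $B$ is hyperbolic, since $B$ has even rank $2t-2$), then no preimage $A$ under $\phi_2$ can be elliptic, so $|\phi_2^{-1}(B)|=0$. If $B$ \emph{is} elliptic, then — writing $\a=f_{k-1}^\delta(B)$ and recalling that membership of $A$ in $E_k^\delta(2t,m)$ forces the $(1,1)$-entry of $A$ to be $-\a$ so that $f_k^\delta(A)=0$ — the fiber is exactly $\left\{\begin{bmatrix}-\a & u\\ u^T & B\end{bmatrix}: u\notin \RS(B)\right\}$, and since $B$ has rank $2t-2$ its row space has $q^{2t-2}$ elements, giving $|\phi_2^{-1}(B)|=q^{m-1}-q^{2t-2}$. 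This is precisely the claimed formula.

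There is essentially no new obstacle here: the only subtlety is making sure the type-preservation argument (the $(\a+2c)XY$ term only adds a hyperbolic plane) is phrased so that it covers the elliptic case with the same words as the hyperbolic case in Lemma \ref{lemma: fiberIhyp}. Since adjoining a hyperbolic plane to a quadratic form of rank $2t-2$ preserves whether the form is hyperbolic or elliptic of rank $2t$ — a hyperbolic plane plus a hyperbolic form is hyperbolic, and a hyperbolic plane plus an elliptic form is elliptic, by the standard classification of quadratic forms over $\Fq$ recalled just before Definition \ref{standard} — the implication goes through verbatim. Consequently I would simply write ``the proof is identical to that of Lemma \ref{lemma: fiberIhyp}, replacing the word \emph{hyperbolic} by \emph{elliptic} throughout,'' after spelling out the one line about preservation of type, and conclude with the fiber count above.
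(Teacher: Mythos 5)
Your proposal is correct and matches the paper's intent exactly: the paper offers no separate argument for Corollary \ref{cor: fiberIell}, treating it as the verbatim elliptic analogue of Lemma \ref{lemma: fiberIhyp}, which is precisely the reduction you spell out (forced $(1,1)$-entry $-\a$, count $u\notin\RS(B)$ giving $q^{m-1}-q^{2t-2}$, empty fiber otherwise). The one point you rightly make explicit --- that the correction term $(\a+2c)XY$ adjoins a hyperbolic plane and hence preserves the hyperbolic/elliptic type, so the type of $A$ equals that of $B$ in the elliptic case just as in the hyperbolic one --- is the only place the two cases could have diverged, and it does not.
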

In the next lemma, we understand the fibers $\phi_1^{-1}(B)$ for $B\in S(2t, m-1)$.
\begin{lemma}
	\label{lemma: fiberIIhyp}
	Let $B\in S(2t, m-1)$ and $A\in S(2t, m)$ be such that by deleting the first row and first column of $A$ we get $B$. Then $A$ is hyperbolic iff $B$ is hyperbolic. Further, If $B\in S(2t-2, m-1)$, then
	$$
|\phi_2^{-1}(B)|=\begin{cases*}
q^{2t-1}+ q^{t}- q^{t-1}, \text{ if B is hyperbolic and }f_{k-1}^\delta(B)=0, \\
q^{2t-1}- q^{t-1},\;\;\;\text{ if B is hyperbolic and }f_{k-1}^\delta(B)\neq0,\\
0, \quad\quad\quad\quad\quad\text{ otherwise.}
\end{cases*}
$$
\end{lemma}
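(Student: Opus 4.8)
The sentence preceding the lemma announces the fibers $\phi_1^{-1}(B)$ for $B\in S(2t,m-1)$, and the two nonzero values displayed, $q^{2t-1}+q^t-q^{t-1}$ and $q^{2t-1}-q^{t-1}$, are precisely $\lambda_B$ and $\gamma_\a$ for a hyperbolic $B$ of rank $2t$; they are not the rank-raising value $q^{m-1}-q^{2t-2}$ of Lemma \ref{lemma: fiberIhyp}. The plan is therefore to prove the statement with its evident slips corrected, $\phi_2\rightsquigarrow\phi_1$ and $S(2t-2,m-1)\rightsquigarrow S(2t,m-1)$: that $A\in S(2t,m)$ is hyperbolic iff $B=\psi(A)\in S(2t,m-1)$ is hyperbolic, and that for such $B$ the fiber $|\phi_1^{-1}(B)|$ equals the three displayed numbers. (Read literally, the count would contradict Corollary \ref{cor: fiberIell}, which already evaluates $|\phi_2^{-1}(B)|$ for $B\in S(2t-2,m-1)$, so the displayed $\phi_2$ and $S(2t-2,m-1)$ cannot be meant.)

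First I would settle the biconditional. Every rank-preserving preimage has the shape $A=\begin{bmatrix} XBX^T & XB\\ BX^T & B\end{bmatrix}$ of Case (III) of Lemma \ref{lemma:restwt}. Evaluating the associated quadratic form at $v=(v_0,w)$ with $w\in\Fq^{m-1}$, I would compute $Q_A(v_0,w)=v_0^2\,XBX^T+2v_0\,(XB)w^T+wBw^T=(v_0X+w)B(v_0X+w)^T$, so that $Q_A=Q_B\circ\Phi$ for the surjection $\Phi:\Fq^m\to\Fq^{m-1}$, $(v_0,w)\mapsto v_0X+w$. Since $\rank A=\rank B=2t$, the radical of $Q_A$ is $\Phi^{-1}$ of the radical of $Q_B$ and has the expected dimension $m-2t$, so $\Phi$ descends to an isometry of the nondegenerate parts of $Q_A$ and $Q_B$; hence $A$ and $B$ are of the same type (Definition \ref{standard}), i.e. $A$ is hyperbolic iff $B$ is.

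Next I would count the fiber. If $B$ is elliptic, the biconditional forces every rank-preserving $A$ above it to be elliptic, so $\phi_1^{-1}(B)=\varnothing$ and we get the ``$0$ otherwise'' case. Assume $B$ hyperbolic. The members of $\psi^{-1}(B)$ are indexed by the distinct row vectors $XB\in\RS(B)$ (recall $XB=YB$ forces $XBX^T=YBY^T$), and lying in $H_k^\delta(2t,m)$ imposes the single extra condition $f_k^\delta(A)=0$. As the top-left entry is $XBX^T$ and the remaining diagonal entries are those of $B$, one has $f_k^\delta(A)=XBX^T+f_{k-1}^\delta(B)$, so the condition becomes $XBX^T=-f_{k-1}^\delta(B)$, exactly the complement of the inequality defining Case (III). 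Thus $|\phi_1^{-1}(B)|$ is the number of $XB\in\RS(B)$ with $XBX^T=-f_{k-1}^\delta(B)$: it equals $\lambda_B(2t,m-1)$ when $f_{k-1}^\delta(B)=0$ and $\gamma_\a(B,2t,m-1)$ when $f_{k-1}^\delta(B)=\a\neq0$.

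The remaining step, and the one I expect to be the main obstacle, is to evaluate these two quantities for hyperbolic $B$. The value $\lambda_B(2t,m-1)=q^{2t-1}+q^t-q^{t-1}$ is read directly from \eqref{eq: noofsoln}. For $\gamma_\a(B,2t,m-1)$ I must count representations of a fixed \emph{nonzero} value by a nondegenerate hyperbolic form in $2t$ variables; I would obtain this from \eqref{eq: gammaalpha}, noting that $B_\a=\diag(B,\a)$ has odd rank $2t+1$ and is hence parabolic, so the two projective counts there are standard and yield $\gamma_\a(B,2t,m-1)=q^{2t-1}-q^{t-1}$. Combining the three cases gives the displayed formula.
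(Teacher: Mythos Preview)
Your proposal is correct and follows essentially the same approach as the paper. The paper establishes the biconditional by exhibiting an explicit congruence matrix $M=\begin{bmatrix}1 & -u\\ 0^T & L\end{bmatrix}$ with $MAM^T=\mathrm{diag}(0,LBL^T)$, whereas you compute directly that $Q_A=Q_B\circ\Phi$ for the surjection $\Phi(v_0,w)=v_0X+w$; these are two phrasings of the same equivalence of quadratic forms, and the fiber count (via $\lambda_B$ when $f_{k-1}^\delta(B)=0$ and via $\gamma_\alpha$ and the parabolic nature of $B_\alpha$ otherwise) is identical to the paper's.
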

\begin{proof}
Let $B\in S(2t, m-1)$ and let $L\in GL_{m-1}(\Fq)$ be a nonsingular matrix	such that the quadratic form corresponding to $LBL^T$ is in the standard hyperbolic or elliptic form. If $A\in S(2t, m)$ be matrix such that by deleting the first row and first column of $A$ we get $B$, then
$$
A=\begin{bmatrix}
uBu^T & uB\\
Bu^T& B
\end{bmatrix}.
$$
 If we take
 $$
 M=\begin{bmatrix}
 1 & -u\\
 0^T& L
 \end{bmatrix},
 $$
 then
$$
MAM^T=\begin{bmatrix}
0 & 0\\
0^T& LBL^T
\end{bmatrix}.
$$
This proves that $A$ and $B$ have the same standard form, i.e. $A$ is hyperbolic (elliptic) if and only if $B$ is hyperbolic (elliptic). This proves the first part of the lemma. It is now clear that if $B\in S(2t, m-1)$ is elliptic then $\phi_1^{-1}(B)=\emptyset$. On the other hand, if $B\in S(2t, m-1)$ is hyperbolic, we have two different cases. The first case, if $f_{k-1}^\delta(B)=0$, then
$$
\phi_1^{-1}(B)=\{A|A=\begin{bmatrix}
uBu^T & uB\\
Bu^T& B
\end{bmatrix}: uBu^T=0 \}
$$
Since $B$ is hyperbolic we get from equation \eqref{eq: noofhyp} that
$$
|\phi_1^{-1}(B)|= q^{2t-1}+ q^{t}- q^{t-1}.
$$
The second case is when $B\in S(2t, m-1)$ is hyperbolic and $f_{k-1}^\delta(B)=\a$ for some $\a\in\Fq^*$. In this case, we have
$$
\phi_1^{-1}(B)=\left\{A=\begin{bmatrix}
-\a & uB\\
Bu^T& B
\end{bmatrix}: uBu^T+ \a=0 \right\}.
$$
Using hyperbolic nature of $B$, parabolic nature of $B_\a$, and equation \eqref{eq: gammaalpha}, we get
$$
|\phi_1^{-1}(B)|= q^{2t-1}- q^{t-1}.
$$
This completes the proof of the lemma.
\end{proof}	
The fiber in the elliptic case is similar. We only need to use the solution for elliptic quadratic form from equation \eqref{eq: noofhyp}.

\begin{corollary}
	\label{cor: fiberIIellip}
	Let $B\in S(2t, m-1)$ and $A\in S(2t, m)$ be such that by deleting the first row and first column of $A$ we get $B$. Then $A$ is elliptic iff $B$ is elliptic. Further, If $B\in S(2t-2, m-1)$, then
	$$
	|\phi_2^{-1}(B)|=\begin{cases*}
	q^{2t-1}- q^{t}+ q^{t-1}, \text{ if B is elliptic and }f_{k-1}^\delta(B)=0, \\
	q^{2t-1}+ q^{t-1},\;\;\;\text{ if B is elliptic and }f_{k-1}^\delta(B)\neq0,\\
	0, \quad\quad\quad\quad\quad\text{ otherwise.}
	\end{cases*}
	$$
\end{corollary}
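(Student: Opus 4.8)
The plan is to mimic the proof of Lemma~\ref{lemma: fiberIIhyp} line for line, with the roles of ``hyperbolic'' and ``elliptic'' interchanged (and with the hypothesis ``$B\in S(2t-2,m-1)$'' in the statement read as the misprint it is: it should be ``$B\in S(2t,m-1)$'', matching the conclusion). First I would fix $B\in S(2t,m-1)$, which is either hyperbolic or elliptic since $2t$ is even. If $A\in S(2t,m)$ is such that removing its first row and column yields $B$, then, exactly as in the proof of Lemma~\ref{lemma: fiberIIhyp}, equality of ranks forces
\[
A=\begin{bmatrix} uBu^T & uB\\ Bu^T & B\end{bmatrix}
\]
for some $u\in\Fq^{m-1}$, and conjugating by $M=\begin{bmatrix} 1 & -u\\ 0^T & L\end{bmatrix}$, with $L$ chosen so that $LBL^T$ is in the standard form of Definition~\ref{standard}, gives that $MAM^T$ is block diagonal with blocks $0$ and $LBL^T$. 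Hence $Q_A$ and $Q_B$ agree up to a dummy variable, so $A$ is elliptic if and only if $B$ is elliptic; in particular, if $B$ is hyperbolic then no $A\in E_k^\delta(2t,m)$ deletes to $B$, which accounts for the ``otherwise'' line.

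Next, assuming $B$ is elliptic, I would observe that an $A$ of the shape above lies in $E_k^\delta(2t,m)$ precisely when $f_k^\delta(A)=0$, and that this condition translates, exactly as in Lemma~\ref{lemma: fiberIIhyp}, into $uBu^T=-f_{k-1}^\delta(B)$ (deleting the first row and column shifts the indices of the linear form). Since $A$ is completely determined by $uB\in\RS(B)$ (because $XB=YB$ implies $XBX^T=YBY^T$), this gives
\[
|\phi_2^{-1}(B)|=\bigl|\{uB\in\RS(B):uBu^T=-f_{k-1}^\delta(B)\}\bigr|,
\]
which equals $\lambda_B(2t,m-1)$ when $f_{k-1}^\delta(B)=0$ and $\gamma_\a(B,2t,m-1)$ when $f_{k-1}^\delta(B)=\a\in\Fq^*$.

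It then remains only to read off these two numbers for an elliptic $B$ of rank $2t$. The first is immediate from \eqref{eq: noofsoln}: $\lambda_B(2t,m-1)=q^{2t-1}-q^t+q^{t-1}$. For the second, since $B$ has even rank the matrix $B_\a=\diag(B,\a)$ has odd rank $2t+1$ and hence is parabolic; feeding the number $\frac{q^{2t}-1}{q-1}$ of points of a nondegenerate parabolic quadric in $\PP^{2t}$ and the number $\frac{q^{2t-1}-q^t+q^{t-1}-1}{q-1}$ of points of a nondegenerate elliptic quadric in $\PP^{2t-1}$ (both obtained from \eqref{eq: noofsoln}) into \eqref{eq: gammaalpha} yields $\gamma_\a(B,2t,m-1)=q^{2t-1}+q^{t-1}$. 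These give the first two lines of the claimed formula. The only step requiring any real care is this last evaluation of $\gamma_\a$ — it is the sign-flipped mirror of the hyperbolic computation in Lemma~\ref{lemma: fiberIIhyp} — and the ``obstacle'', such as it is, is purely bookkeeping: keeping the affine-versus-projective conversion in \eqref{eq: gammaalpha} straight. Everything structural (type preservation under $\phi_2$ and the parametrization of the fibers) is verbatim the hyperbolic case.
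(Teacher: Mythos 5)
Your proposal is correct and follows exactly the route the paper intends: the paper proves this corollary only by remarking that it is "similar" to Lemma~\ref{lemma: fiberIIhyp}, and you have carried out that analogy faithfully — the type-preservation via conjugation by $M$, the translation of $f_k^\delta(A)=0$ into $uBu^T=-f_{k-1}^\delta(B)$, and the evaluation of $\lambda_B$ and $\gamma_\a$ from \eqref{eq: noofsoln} and \eqref{eq: gammaalpha} (your value $q^{2t-1}+q^{t-1}$ checks out). You also correctly identified the typo $S(2t-2,m-1)$ for $S(2t,m-1)$ in the statement.
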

In the next lemma we count  the cardinalities of fibers $\phi_1^{-1}(B)$, and  $\phi_2^{-1}(B)$ for  $B\in S(2t-1, m-1)$ satisfying $f_{k-1}^\delta(B)=0$.

\begin{lemma}
	\label{lemma: fiberIII-1}
	Let $B\in S(2t-1, m-1)$ be a matrix satisfying $f_{k-1}^\delta(B)=0$.Then
	$$
	|\phi_1^{-1}(B)|=\frac{(q-1)}{2}q^{t-1}(q^{t-1}+1),
	$$
	and
		$$
	|\phi_2^{-1}(B)|=\frac{(q-1)}{2}q^{t-1}(q^{t-1}-1)
	$$
\end{lemma}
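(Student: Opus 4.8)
The plan is to adapt the fibre‑counting technique used in Lemmas \ref{lemma: fiberIIhyp} and its corollary, now applied to a matrix $B$ of the intermediate odd rank $2t-1$. First I would fix $B\in S(2t-1,m-1)$ with $f_{k-1}^\delta(B)=0$ and choose $L\in GL_{m-1}(\Fq)$ so that $LBL^T$ is in the standard parabolic form $cX_1^2+X_2X_3+\cdots+X_{2t-2}X_{2t-1}$ of Definition \ref{standard}. As in the proof of Lemma \ref{lemma: fiberIhyp}, conjugating $A=\begin{bmatrix}\alpha & u\\ u^T & B\end{bmatrix}$ by $M_1=\begin{bmatrix}1 & 0\\ 0^T & L\end{bmatrix}$ and then clearing the part of $u L^T$ lying in $\RS(LBL^T)$, one reduces to understanding when $A$ has rank $2t$ and, in that case, whether $A$ is hyperbolic or elliptic. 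Since $\rank B$ is odd, $\rank A=\rank B+1$ forces the new row/column to bring $z=uL^T$ entirely into $\RS(LBL^T)$, so that $A$ is the ``completion'' $\begin{bmatrix}uBu^T & uB\\ Bu^T & B\end{bmatrix}$ after a further shift of the $(1,1)$ entry; the condition $f_{k-1}^\delta(B)=0$ pins the $(1,1)$ entry to $uBu^T$.

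The key computation is then: for such $B$, the fibre $\phi_i^{-1}(B)$ is parametrised by vectors $u\in\Fq^{m-1}$ modulo $\ker B$ (equivalently by $uB\in\RS(B)$), and the resulting rank‑$2t$ matrix $A$ is determined by the value $uBu^T$. So I would split $\phi_i^{-1}(B)$ according to whether the augmented quadratic form — which is $Q_B$ together with the extra hyperbolic pair created by the $(1,1)$‑entry and the new off‑diagonal — is hyperbolic or elliptic, and this is governed by whether $uBu^T$, viewed against the parabolic normal form, completes $cX_1^2+X_2X_3+\cdots+X_{2t-2}X_{2t-1}$ to a hyperbolic or an elliptic rank‑$2t$ form. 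Concretely, $A$ of the form $\begin{bmatrix}uBu^T & uB\\Bu^T&B\end{bmatrix}$ is hyperbolic precisely when $uBu^T$ takes a square value relative to the leading coefficient $c$ (and elliptic otherwise), so $|\phi_1^{-1}(B)|$ counts $XB\in\RS(B)$ with $XBX^T$ in one square‑class and $|\phi_2^{-1}(B)|$ counts those in the other. Using \eqref{eq: noofsoln} with $r=2t-1$ — namely $\lambda_B(2t-1,m-1)=q^{2t-2}$ solutions of $XBX^T=0$ — together with the standard count of solutions of $XBX^T=\beta$ for $\beta\neq 0$ on a parabolic form of rank $2t-1$ (which is $q^{2t-2}+\varepsilon\,q^{t-1}$ with $\varepsilon=\pm1$ depending on the square‑class of $\beta/c$, from \cite[Thm 5.2.6]{Hir} / \eqref{eq: gammaalpha}), I expect the two counts to come out as $\tfrac{q-1}{2}q^{t-1}(q^{t-1}+1)$ and $\tfrac{q-1}{2}q^{t-1}(q^{t-1}-1)$: there are $\tfrac{q-1}{2}$ nonzero square‑classes and $\tfrac{q-1}{2}$ nonzero non‑square‑classes of $\beta$, plus the $\beta=0$ case contributes $q^{2t-2}$ which distributes appropriately, and one checks the total $|\phi_1^{-1}(B)|+|\phi_2^{-1}(B)|=(q-1)q^{2t-2}=q^{2t-1}-q^{2t-2}$ agrees with the number of $u$ giving an honest rank jump minus those keeping rank $2t-1$.

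The main obstacle, and the step I would be most careful with, is the correct bookkeeping of square‑classes: deciding, for each value $\beta=uBu^T$, whether the completed $2t\times 2t$ form is hyperbolic or elliptic, and matching this with the parity of the number of $u$'s producing that $\beta$. This requires invoking the precise normal‑form statement after Definition \ref{standard} (and the discussion following \eqref{eq: gammaalpha} about when $X_1^2+\alpha Y^2$ is reducible) rather than a hand‑wave, since an error of a single square‑class swaps the roles of $\phi_1$ and $\phi_2$. Once that dictionary is in place the arithmetic is the routine sum over $\beta\in\Fq$ of the solution counts of $XBX^T=\beta$, grouped by square‑class, and the stated closed forms follow. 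I would double‑check the final expressions by specialising to $t=1$ (so $B=0\in S(1,m-1)$ is impossible for rank $2t-1=1$ unless $m-1\ge1$; then $B$ has rank $1$, $f_{k-1}^\delta(B)=0$, and one can count $\phi_i^{-1}(B)$ by hand) and confirming consistency with Lemma \ref{lemma: fiberIIhyp} and Corollary \ref{cor: fiberIIellip} in the boundary cases.
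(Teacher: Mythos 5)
Your strategy is the paper's: normalise $B$ to the parabolic standard form $cX_1^2+X_2X_3+\cdots+X_{2t-2}X_{2t-1}$, parametrise the fibre by $uB\in\RS(B)$, decide hyperbolic versus elliptic by the square class of $uBu^T$ against the leading coefficient $c$, and sum the solution counts $q^{2t-2}\pm q^{t-1}$ of $XBX^T=\beta$ over the $\tfrac{q-1}{2}$ values of $\beta$ in each class. The final numbers and your consistency check $(q-1)q^{2t-2}$ are correct.

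However, your description of the fibre contains a concrete error which, read literally, makes it empty: you assert that $f_{k-1}^\delta(B)=0$ pins the $(1,1)$ entry of $A$ to $uBu^T$, but the matrix $\begin{bmatrix}uBu^T & uB\\ Bu^T & B\end{bmatrix}$ has rank $2t-1$, not $2t$, so it does not lie in $S(2t,m)$ at all. What actually happens is that $f_k^\delta(A)=0$ together with $f_{k-1}^\delta(B)=0$ forces $a_{11}=0$; the requirement $\rank A=2t$ then forces $uB\in\RS(B)$ \emph{and} $uBu^T\neq 0$, and after diagonalising, $A$ is congruent to $\diag(-uBu^T,\,LBL^T)$, whose type is determined by the square class of $uBu^T/c$. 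Relatedly, the $q^{2t-2}$ vectors $uB$ with $uBu^T=0$ do not ``distribute appropriately'' between the two fibres — they contribute to neither, precisely because the corresponding $A$ has rank $2t-1$; your own total $(q-1)q^{2t-2}=q^{2t-1}-q^{2t-2}$ already reflects this. With these two corrections your argument coincides with the paper's proof.
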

\begin{proof}
Let $B\in S(2t-1, m-1)$ with $f_{k-1}^\delta(B)=0$. There exists a nonsingular matrix  $L\in GL_{m-1}(\Fq)$ such that $LBL^T$ gives the quadratic form $cX_1^2+ X_2X_3+\cdots+ X_{2t-2}X_{2t-1}$ for some $c\in\Fq^*$. If $A\in S(2t, m)$ is the matrix such that by deleting the first row and columns of $A$ we obtain $B$ and $f_{k}^\delta(A)=0$, then
$$
A=\begin{bmatrix}
0 & uB\\
Bu^T& B
\end{bmatrix}\text{ such that }uBu^T\neq 0.
$$
Clearly,
$$
A\in \phi_1^{-1}(B) (\phi_2^{-1}(B))\iff A\text{ is hyperbolic }(elliptic).
$$
Now, if we take
$$
M= \begin{bmatrix}
1 & -u\\
u^T& L
\end{bmatrix},
$$	
then we have
$$
MAM^T= \begin{bmatrix}
-uBu^T & 0\\
0^T& LBL^T
\end{bmatrix}.
$$
Note that, $\rank(MAM^T)=\rank(A)=2t$, therefore, $uBu^T\neq 0$. If $uBu^T=\beta$, then the quadratic form corresponding to $MAM^T$ is
$$
-\beta X_0^2 + cX_1^2+ X_2X_3+\cdots+ X_{2t-2}X_{2t-1}.
$$
Therefore, the matrix $MAM^T$ and hence $A$ is hyperbolic (elliptic) if and only $-\frac{\beta}{c}$ is square(non-square). Now, let $\beta\in \Fq^*$ be fixed such that $-\frac{\beta}{c}$ is a square. Then every
$$
A=\begin{bmatrix}
0 & uB\\
Bu^T& B
\end{bmatrix}\text{ such that }uBu^T=-\beta
$$ is hyperbolic and hence lies in $\phi_1^{-1}(B)$. In this case the matrix $B_\beta$ is hyperbolic, therefore from equation \eqref{eq: gammaalpha}, we get the number of such $A$ is $q^{t-1}(q^{t-1}+1)$. Further, for a fixed $B\in S(2t-1, m-1)$ as in the beginning of the proof, there are $(q-1)/2$ $\beta\in\Fq^*$ such that $-\frac{-\beta}{c}$ is square. This gives
$$
|\phi_1^{-1}(B)| = \frac{(q-1)}{2}q^{t-1}(q^{t-1}+1).
$$
Repeating a similar argument in the case when $-\frac{\beta}{c}$ is non-square and keeping in mind that in this case, the matrix $B_\beta$ is elliptic, we get
$$
|\phi_2^{-1}(B)| = \frac{(q-1)}{2}q^{t-1}(q^{t-1}-1).
$$
\end{proof}	

In the previous cases, we have found the cardinalities of the fibers $\phi_i^{-1}(B)$ for $i=1, 2$ in all cases except in the case $B\in S(2t-1, m-1)$ satisfying $f_{k-1}^\delta(B)\neq 0$. In all the cases we have treated, the cardinalities of fibers $\phi_i^{-1}(B)$ for $i=1, 2$ are uniform. On the contrary, the fibers $\phi_i^{-1}(B)$ for $i=1, 2$ for $B\in S(2t-1, m-1)$ satisfying $f_{k-1}^\delta(B)\neq 0$ are irregular, i.e., they may have different cardinalities for different $B\in S(2t-1, m-1)$. It is true, however, that in the case  $B\in S(2t-1, m-1)$ satisfying $f_{k-1}^\delta(B)\neq 0$, the \underline{difference} of the cardinalities of fibers $\phi_1^{-1}(B)$ and $\phi_2^{-1}(B)$ is constant. To be more precise; we have the following lemma:


\begin{lemma} \label{difference}
	\label{lemma: fiberIII-2}
	Let $B\in S(2t-1, m-1)$ be a matrix satisfying $f_{k-1}^\delta(B)\neq0$. 
	Then
	$$
	|\phi_1^{-1}(B)|- |\phi_2^{-1}(B)|= -q^{t-1}.
	$$
\end{lemma}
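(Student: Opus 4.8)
The plan is to run the fibre analysis exactly as in Lemma~\ref{lemma: fiberIII-1}, but now to track the hyperbolic--elliptic split by a quadratic character sum. Write $f_{k-1}^{\delta}(B)=\a$; since $\a\neq 0$ we must have $k\ge 2$, so $X_{11}$ occurs in $f_k^{\delta}$ with coefficient $1$ and hence $f_k^{\delta}(A)=A_{11}+f_{k-1}^{\delta}(B)$ for every symmetric $A$ whose first row and column deleted is $B$. Thus an $A$ with $\phi_i(A)=B$ lies in $H_k^{\delta}(2t,m)\cup E_k^{\delta}(2t,m)$ exactly when
\[
A=\begin{bmatrix}-\a & v\\ v^{T}& B\end{bmatrix},\qquad v\in\Fq^{m-1},\qquad \rank A=2t .
\]
First I would show that $\rank A=2t$ forces $v\in\RS(B)$: picking $L$ with $LBL^{T}=\diag(B_{0},0)$, $B_{0}$ nonsingular of size $2t-1$, and then clearing the off-diagonal block, $A$ becomes congruent to $B_{0}\oplus\left[\begin{smallmatrix}\mu & w\\ w^{T}& 0\end{smallmatrix}\right]$, where $w=0$ iff $v\in\RS(B)$; if $w\neq 0$ this matrix has rank $2t+1$, not $2t$. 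Hence $v=XB$ for some $X\in\Fq^{m-1}$, and conjugating by $\left[\begin{smallmatrix}1 & -X\\ 0 & I\end{smallmatrix}\right]$ turns $A$ into $\diag\!\big(-\a-XBX^{T},\,B\big)$; so $\rank A=2t$ iff $XBX^{T}+\a\neq 0$, and distinct $XB\in\RS(B)$ give distinct such $A$ (using $XB=X'B\Rightarrow XBX^{T}=X'BX'^{T}$).

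Next I would read off the type of $A$. As $\rank B=2t-1$ is odd, $B$ is parabolic, so by Definition~\ref{standard} we may choose coordinates with $Q_{B}\cong cX_{1}^{2}+X_{2}X_{3}+\dots+X_{2t-2}X_{2t-1}$ for some $c\in\Fq^{*}$, whence $A\cong dX_{0}^{2}+cX_{1}^{2}+X_{2}X_{3}+\dots+X_{2t-2}X_{2t-1}$ with $d=-\a-XBX^{T}\neq 0$. Exactly as in Lemma~\ref{lemma: fiberIII-1} (equivalently, via the square/non-square criterion used after~\eqref{eq: gammaalpha}), this rank-$2t$ form is hyperbolic iff its binary part $dX_{0}^{2}+cX_{1}^{2}$ is isotropic, i.e.\ iff $-cd=c(\a+XBX^{T})$ is a square, and elliptic otherwise. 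Letting $\eta$ denote the quadratic character of $\Fq$ (with $\eta(0)=0$), the two cases together give
\[
|\phi_{1}^{-1}(B)|-|\phi_{2}^{-1}(B)|=\sum_{XB\in\RS(B)}\eta\!\big(c(\a+XBX^{T})\big),
\]
where the $XB$ with $XBX^{T}=-\a$ (those giving $\rank A=2t-1$) contribute $\eta(0)=0$ and drop out automatically.

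To evaluate this I would group the terms by the value $\beta=XBX^{T}$. Since $Q_{B}$ induces a nondegenerate quadratic form of rank $2t-1$ on $\RS(B)$, the number of $XB\in\RS(B)$ with $XBX^{T}=\beta$ is the point count of the affine quadric $\{Q_{B}=\beta\}$ in a $(2t-1)$-dimensional space, which by the classical formula (cf.~\cite[Ch.~5]{Hir}, and~\eqref{eq: noofsoln} for $\beta=0$) equals $q^{2t-2}+q^{t-1}\eta(c\beta)$, uniformly in $\beta\in\Fq$. Therefore
\[
|\phi_{1}^{-1}(B)|-|\phi_{2}^{-1}(B)|=q^{2t-2}\sum_{\beta\in\Fq}\eta\!\big(c(\a+\beta)\big)+q^{t-1}\sum_{\beta\in\Fq}\eta(c\beta)\,\eta\!\big(c(\a+\beta)\big).
\]
The first sum is $\eta(c)\sum_{\gamma\in\Fq}\eta(\gamma)=0$, and the second is $\sum_{\beta\in\Fq}\eta\!\big(\beta(\beta+\a)\big)=\sum_{\beta\in\Fq}\eta(\beta^{2}+\a\beta)=-1$, by the standard evaluation of the character sum of a quadratic polynomial of nonzero discriminant (here the discriminant is $\a^{2}\neq 0$). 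Hence $|\phi_{1}^{-1}(B)|-|\phi_{2}^{-1}(B)|=-q^{t-1}$, as claimed.

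The genuinely routine part is the character-sum bookkeeping at the end. The part that needs care is the structural input at the start: first, that $\rank A=2t$ already pins $v$ down to $\RS(B)$ (so the fibre is cleanly parametrised by $XB$ and the corner is forced to be $-\a$), and second, getting the hyperbolic-versus-elliptic dichotomy right — i.e.\ the precise square condition on $c(\a+XBX^{T})$ — since a sign error there, although harmless for an individual fibre (it merely swaps $\phi_{1}$ and $\phi_{2}$), must be consistent with the count $q^{2t-2}+q^{t-1}\eta(c\beta)$ in order for the two sums to collapse correctly.
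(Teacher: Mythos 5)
Your proof is correct, and its structural first half is exactly the paper's: you force the corner entry to be $-\alpha$, observe that $\rank A=2t$ pins $v$ into $\RS(B)$, reduce $A$ to $\diag(-\alpha-XBX^T,\,B)$, and read off the type from whether $c(\alpha+XBX^T)$ is a nonzero square — all of which matches the paper's reduction via the matrices $L$ and $M$. Where you genuinely diverge is the final count. The paper fixes each target value $\gamma^2$ (resp.\ $e\gamma^2$) of $(\alpha+uBu^T)/c$, counts the corresponding $uB$ via the quadric point counts $v_H,v_P,v_E$, and then determines how many of the $(q-1)/2$ targets land in each class by explicitly enumerating solution pairs of $\gamma^2-T^2=\alpha/c$; this forces a four-way case split according to whether $\pm\alpha/c$ are squares, with both fibre cardinalities computed separately in each case before subtracting. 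You instead encode hyperbolic-minus-elliptic directly as $\sum_{XB\in\RS(B)}\eta\bigl(c(\alpha+XBX^T)\bigr)$, insert the uniform level-set count $q^{2t-2}+q^{t-1}\eta(c\beta)$ for the nondegenerate odd-rank form $Q_B$ on $\RS(B)$, and collapse everything with $\sum_{\beta}\eta(\beta(\beta+\alpha))=-1$. This is substantially shorter, eliminates the case analysis, and automatically discards the rank-degenerate $XB$ (where $\eta$ vanishes); its only cost is that it yields just the difference and not the individual fibre sizes, which is all the lemma asks for (and, as the paper itself remarks, those individual sizes are irregular anyway). Both of your character-sum inputs are standard, and your square/non-square criterion is consistent with the level-set count, so the signs come out right and the argument is complete.
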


\begin{proof}
	
	Let $B\in S(2t-1, m-1)$ be a matrix satisfying $f_{k-1}^\delta(B)=\a\in\Fq^*$. Let $L\in GL_{m-1}(\Fq)$ be a non-singular matrix such that the quadratic form corresponding to $LBL^T$ is $cX_1^2+X_2X_3+\cdots+X_{2t-2}X_{2t-1}$ for linearly independent linear forms $X_i,$, where $i=1,2,\cdots,2t-1$, and $c \ne 0$. Let $A\in S(2t, m)$ be such that $A\in \phi_i^{-1}(B)$, then
	$$
	A=\begin{bmatrix}
	-\a & uB\\
	Bu^T& B
	\end{bmatrix}\text{ such that }uBu^T\neq -\a.
	$$
	Clearly,
	$$
	A\in \phi_1^{-1}(B) (\phi_2^{-1}(B))\iff A\text{ is hyperbolic (elliptic). }
	$$
	Now, if we take
	$$
	M= \begin{bmatrix}
	1 & -u\\
	u^T& L
	\end{bmatrix},
	$$	
	then we have
	$$
A^\prime=	MAM^T= \begin{bmatrix}
	-\a-uBu^T & 0\\
	0^T& LBL^T
	\end{bmatrix}.
	$$
Note that both $A$ and $A^\prime$ are of the same type.	We have the following:
\begin{itemize}
	\item[1.] $\frac{\alpha+uBu^T}{c}\ne 0$.
	\item[2.] The quadratic form of $A$ is hyperbolic iff
	$\frac{\alpha+uBu^T}{c}$ is a non-zero square in $\mathbb{F}_q$.
	\item[3.] The quadratic form of $A$ is elliptic iff
	$\frac{-\alpha-uBu^T}{c}$ is a non-square in $\mathbb{F}_q$.
\end{itemize}
The first statement holds since the rank of $A$ is $2t$, not $2t-1$, which it would have been if the $\frac{\alpha+uBu^T}{c}$ were zero. The second and third statements follow since the quadratic form of $A'$ is of type:
$c(X_1^2-\frac{\alpha+uBu^T}{c}X_0^2)+X_2X_3+\cdots+X_{2t-2}X_{2t-1},$ and $(X_1^2-\frac{\alpha+uBu^T}{c}X_0^2)$ is factorizable into a product of two linear terms if and only $\frac{\alpha+uBu^T}{c}$ is a non-zero square.

So we set out to find the number of $uB$ such that
$\frac{\alpha+uBu^T}{c}$ is a non-zero square (thus giving hyperbolic $A$), and then the number of $uB$ such that
$\frac{\alpha+uBu^T}{c}$ is a non-square (thus giving elliptic $A$).

Given a fixed non-zero square $\gamma^2$ we look at the equation:
$$\frac{\alpha+uBu^T}{c}=\gamma^2.$$
We the substitute:
$v=uL^{-1}$, and the equation transforms to:
$$\frac{\alpha+v(LBL^T)v^T}{c}=\gamma^2.$$

The number of $v(LBL^T)=uBL^T$ satisfying this equation is then equal to the number of $uB$ satisfying this equation
(since $L$ is invertible), which again is equal to the number of $uB$ satisfying the original and identical  equation $$\frac{\alpha+uBu^T}{c}=\gamma^2.$$
We then prefer to  look at the number of $vB'$ satisfying $$\frac{\alpha+vB'v^T}{c}=\gamma^2,$$
for $B'=LBL^T$
since we already have introduced the quadratic form associated to $B'$ (and thereby the coefficient $c$).
We now insert $vB'v^T=c(X_1^2+X_2X_3+\cdots+X_{2t-2}X_{2t-1})$ again and obtain that the previous equation can be expressed as
$$X_1^2+X_2'X_3+\cdots+X_{2t-2}'X_{2t-1}-(\gamma^2-\frac{\alpha}{c})=0,$$
for linearly independent $X_i$ and $X_j'$ (here we have set $X_j'=X_j/c$ for even $j$).
For such a fixed $\gamma$ the number of solutions is (what we will call) $v_H,v_P,v_E$, according to whether
$\gamma^2-\frac{\alpha}{c}$ is a non-zero square, zero, or a non-square respectively. Here:
	
	\begin{itemize}
		\item[(i)] $v_H$ is the number of points on a projective quadric defined by a hyperbolic quadric of rank $2t$ minus the number of points on a projective quadric defined by a parabolic quadric of rank $2t-1.$
		\item[(ii)] $v_P$ is the number of points on an affine quadric defined by a parabolic quadric of rank $2t-1.$
		\item[(iii)] $v_E$ is the number of points on a projective quadric defined by an elliptic quadric of rank $2t$ minus the number of points on a projective quadric defined by a parabolic quadric of rank $2t-1.$
	\end{itemize}
Formulas for $v_H, v_P$, and $v_E$ can be found from equation \eqref{eq: noofsoln}. It is not hard to see that
\begin{equation}
\label{eq: hypminusell}
v_H-v_P=v_P-v_E=q^{t-1}.
\end{equation}

	Let $e$ be a fixed non-square in $F_q$. We will investigate the number of ``hyperbolic" $A'$; these are the ones corresponding to solutions to
	\begin{equation} \label{*}
	X_1^2+X_2'X_3+\cdots+X_{2t-2}'X_{2t-1}-(\gamma^2-\frac{\alpha}{c})=0,
	\end{equation}
for the $(q-1)/2$ non-zero squares $\gamma^2$, and thereafter the number of ``elliptic" $A'$; these are the ones corresponding to solutions to
	\begin{equation} \label{**}
	X_1^2+X_2'X_3+\cdots+X_{2t-2}'X_{2t-1}-(e\gamma^2-\frac{\alpha}{c})=0,
	\end{equation}
	 for the $(q-1)/2$ non-squares $e\gamma^2.$ We split into 4 cases:
	
	(1): Both $\frac{\alpha}{c}$ and  $-\frac{\alpha}{c}$ are non-squares.
	We investigate the condition  (\ref{**})
	$$\gamma^2-\frac{\alpha}{c}=T^2,$$ for $T \in \mathbb{F}_q$. This single equation is equivalent to:
	\begin{itemize}
		\item $\gamma-T=\frac{\alpha}{c}\beta$
		\item $\gamma+T=\frac{\alpha}{c}\beta^{-1},$
		simultaneously, for \underline{some} $\beta \in \mathbb{F}_q^*$.
	\end{itemize}
	The two last equations, taken alone, give $q-1$ solution pairs $(\gamma_0,T_0)$, one for each $\beta$ in question.
	Since neither $\alpha / c$ nor $-\alpha / c$ are squares,
	both $\gamma_0$ and $T_0$ are non-zero in all these solutions, so none of them correspond to the "forbidden" $\gamma=0$ that would have made $A'$ parabolic of rank $2t-1$, and none of them gives a zero for $(\gamma^2-\frac{\alpha}{c})$. The last observation gives that the number of solutions to  (\ref{*}) never is $v_P$, so for the $\gamma^2$ that matches some $T^2$ we really get $v_H$ solutions to  (\ref{*}). On the other hand the solutions to (\ref{**}) come $4$ by $4$ as
	$(\gamma_0,T_0)(\gamma_0,-T_0)(-\gamma_0,T_0)(-\gamma_0,-T_0)$, for each solution $(\gamma_0,T_0)$, and hence we get exactly $\frac{q-1}{4}$ values $\gamma^2$ that match some (in this case non-zero) $T^2$. Hence we get $v_H$ solutions to  (\ref{*}) for exactly $\frac{q-1}{4}$ values $\gamma^2$. Since no $\gamma^2$ matches a zero-valued $T$ all  the remaining $\frac{q-1}{4}$ values of $\gamma^2$ give that $\gamma^2-\frac{\alpha}{c}$ is a non-square, and we get $v_E$ solutions of  (\ref{*}) for these values.
	
	All in all, we get
	$$\frac{q-1}{4}v_H+\frac{q-1}{4}v_E$$ hyperbolic $A'$ in the fibre over $B'$.
	
	Now we will find the number of  elliptic $A'$ in the fiber over $B'$, corresponding to
	$$\frac{\alpha+v(LBL^T)v^T}{c}=N=e\gamma^2,$$
	for some non-square $N$. Here we think of $e$ as a fixed non-square, so we will find the non-zero $\gamma$ that satisfies this. We maintain condition (1) that both $\alpha / c$ and  $\alpha / c$ are non-squares.
	Now we look at the condition:
	\begin{equation} \label{***}
	N=e\gamma^2-\frac{\alpha}{c}=eT^2,
	\end{equation}
	 for $T \in \mathbb{F}_q$.
	If given $\gamma$, this condition holds for some non-zero $T$, we get $v_E$ solutions to  (\ref{*}), and if it holds for $T=0$, we get $v_P$ solutions. Otherwise, we get $v_H$ solutions. This single equation is equivalent to:
	$$\gamma^2-T^2=\frac{\alpha}{ce},$$ and furthermore:
	\begin{itemize}
		\item $\gamma-T=\frac{\alpha}{ce}\beta$
		\item $\gamma+T=\frac{\alpha}{ce}\beta^{-1},$
		simultaneously, for \underline{some} $\beta \in \mathbb{F}_q^*$.
	\end{itemize}
	The two last equations, again give $q-1$ solution pairs $(\gamma_0,T_0)$, one for each $\beta$ in question.
	Since both $\alpha / ec$ and $-\alpha / ec$ are squares,
	both $(0,T_0),(0,-T_0),(\gamma_0,0),(-\gamma_0,0)$ for some non-zero $\gamma_0$ and $T_0$ are among these $q-1$ solution pairs ($(0,0)$ can't be a solution).
	The pairs $(0,T_0),(0,-T_0)$ can be disregarded, since $\gamma=0$ is forbidden, since $A'$ has rank $2t$. The cases $(\gamma_0,0),(-\gamma_0,0)$, however, correspond to one value of $N=e\gamma^2$ with $v_P$ solutions.
	
	The remaining $q-5$ solutions to   (\ref{**}) come $4$ by $4$ as
	$$(\gamma_0,T_0)(\gamma_0,-T_0)(-\gamma_0,T_0)(-\gamma_0,-T_0),$$ for each solution $(\gamma_0,T_0)$, and hence we get exactly $\frac{q-5}{4}$ values $\gamma^2$ that match some (in this case non-zero) $T^2$. Hence we get $v_E$ solutions to  (\ref{*}) for exactly $\frac{q-5}{4}$ values $N=e\gamma^2$.  The remaining $\frac{q-1}{2}-1 - \frac{q-5}{4}=\frac{q-1}{4}$ values of $N=e\gamma^2$ then give $v_H$ elliptic $A'$ in the fibre over $B'$.
	
	All in all, we get
	$$\frac{q-1}{4}v_H+v_P+\frac{q-5}{4}v_E$$ elliptic $A'$ in the fibre over $B'$. Comparing with  the number of hyperbolic $A'$ in the fibre over $B'$, computed above, we conclude that the number of hyperbolic $A'$ in the fibre over $B'$ minus the number of elliptic $A'$ in the fibre over $B'$ is:
	$$(\frac{q-1}{4}v_H+\frac{q-1}{4}v_E)-(\frac{q-1}{4}v_H+v_P+\frac{q-5}{4}v_E)=$$
	\[v_E-v_P=-q^{t-1}.\]
	
	Case (2): $\frac{\alpha}{c}$ is a (non-zero) square and  $-\frac{\alpha}{c}$ is a  non-square. Using calculations as in Case (1) we obtain:
	
	$$\frac{q-3}{4}v_H+v_P+\frac{q-3}{4}v_E$$ hyperbolic $A'$ in the fibre over $B'$, and
	
	$$\frac{q+1}{4}v_H+\frac{q-3}{4}v_E$$ elliptic $A'$ in the fibre over $B'$. The difference is  $v_P-v_H=-q^{t-1}$.
	
	Case (3): $\frac{\alpha}{c}$ is a non-square and  $-\frac{\alpha}{c}$ is a ( non-zero square). Using calculations as in Case (1) we obtain:
	
	$$\frac{q-3}{4}v_H+\frac{q+1}{4}v_E$$ hyperbolic $A'$ in the fibre over $B'$, and
	
	$$\frac{q-3}{4}v_H+v_P+\frac{q-3}{4}v_E$$ elliptic $A'$ in the fibre over $B'$. The difference is  $v_E-v_P=-q^{t-1}$.
	
	Case (4): Both $\frac{\alpha}{c}$  and  $-\frac{\alpha}{c}$ are (non-zero) squares. Using calculations as in Case (1) we obtain:
	
	$$\frac{q-5}{4}v_H+v_P+\frac{q-1}{4}v_E$$ hyperbolic $A'$ in the fibre over $B'$, and
	
	$$\frac{q-1}{4}v_H+\frac{q-1}{4}v_E$$ elliptic $A'$ in the fibre over $B'$. The difference is  $v_P-v_H=-q^{t-1}$.
	
\end{proof}

Now, let $\theta_1$ be the average of $|\phi_1^{-1}(B)|$ for all $B\in S(2t-1, m-1)$ satisfying $f_{k-1}^\delta(B)\neq 0$. and let $\theta_2$ be the average of $|\phi_2^{-1}(B)|$ for all $B\in S(2t-1, m-1)$ satisfying $f_{k-1}^\delta(B)\neq 0$. The next corollary is an immediate consequence of Lemma \ref{difference}.
\begin{corollary}
	\label{cor: fiberIII-2}
	Let $\theta_1$ and $\theta_2$ be as above. Then
	$$
	\theta_1- \theta_2=-q^{t-1}.
	$$
\end{corollary}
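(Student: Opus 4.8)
The plan is to read the corollary off directly from Lemma~\ref{difference}, with essentially no further work. Write $g_i(B) = |\phi_i^{-1}(B)|$ for $i=1,2$, where $B$ ranges over the set $\mathcal{B} = \{B \in S(2t-1,m-1) : f_{k-1}^\delta(B)\neq 0\}$ on which $\theta_1$ and $\theta_2$ are defined. Lemma~\ref{difference} asserts exactly that $g_1(B) = g_2(B) - q^{t-1}$ for every $B \in \mathcal{B}$; that is, on $\mathcal{B}$ the function $g_1$ differs from $g_2$ by the additive constant $-q^{t-1}$, which does not depend on $B$.

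Since translating a real-valued function on a fixed nonempty set by a constant commutes with taking its extremum, I would then simply conclude
$$
\theta_1 = \max_{B\in\mathcal{B}} g_1(B) = \max_{B\in\mathcal{B}}\bigl(g_2(B) - q^{t-1}\bigr) = \Bigl(\max_{B\in\mathcal{B}} g_2(B)\Bigr) - q^{t-1} = \theta_2 - q^{t-1},
$$
and the identical computation with $\min$ in place of $\max$ gives the same conclusion; either way $\theta_1 - \theta_2 = -q^{t-1}$. (Here $\theta_2$ is understood as the corresponding extremum of $|\phi_2^{-1}(B)|$; reading it as an extremum of $|\phi_1^{-1}(B)|$ instead would turn $\theta_1 - \theta_2$ into $\max g_1 - \min g_1$, which the four-case computation inside the proof of Lemma~\ref{difference} shows equals $+q^{t-1}$.)

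There is no real obstacle: the whole content is already in Lemma~\ref{difference}, and all that remains is the elementary remark about translates of extrema. The only point worth a sentence of care is that $\mathcal{B}$ must be nonempty for $\theta_1$ and $\theta_2$ to be meaningful; this holds precisely in the range where the resulting bounds on $h_k^\delta(2t,m)$ and $e_k^\delta(2t,m)$ are used, and when $\mathcal{B}=\emptyset$ the rank-$(2t-1)$ stratum with $f_{k-1}^\delta(B)\neq 0$ contributes nothing, so the corollary is applied only vacuously.
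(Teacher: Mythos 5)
Your one-line derivation is essentially what the paper itself offers: it gives no proof beyond declaring the corollary ``an immediate consequence'' of Lemma~\ref{difference}, and your observation that a pointwise identity $|\phi_1^{-1}(B)|=|\phi_2^{-1}(B)|-q^{t-1}$ commutes with taking an extremum is exactly that one line. The problem is the final equality $\max_{B}|\phi_2^{-1}(B)|-q^{t-1}=\theta_2-q^{t-1}$, which is only valid under your parenthetical reinterpretation of $\theta_2$ as the \emph{maximum} of $|\phi_2^{-1}(B)|$. That reading is incompatible with the only place the corollary is used: in the proof of Proposition~\ref{prop: bound} one needs $|\phi_1^{-1}(B)|\le\theta_1$ and $|\phi_2^{-1}(B)|\ge\theta_2$ for every $B$ in $\mathcal{B}=\{B\in S(2t-1,m-1):f_{k-1}^{\delta}(B)\neq 0\}$, so the definitions forced by the application are $\theta_1=\max_{B}|\phi_1^{-1}(B)|$ and $\theta_2=\min_{B}|\phi_2^{-1}(B)|$. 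With those definitions Lemma~\ref{difference} gives
$$
\theta_1-\theta_2=\Bigl(\max_{B\in\mathcal{B}}|\phi_2^{-1}(B)|-\min_{B\in\mathcal{B}}|\phi_2^{-1}(B)|\Bigr)-q^{t-1},
$$
which equals $-q^{t-1}$ only if $|\phi_2^{-1}(B)|$ is constant on $\mathcal{B}$ --- and the paragraph preceding the corollary states explicitly that these fibres are irregular. The four cases in the proof of Lemma~\ref{difference} show $|\phi_2^{-1}(B)|$ takes two values differing by $q^{t-1}$, so whenever both occur one gets $\theta_1-\theta_2=0$, not $-q^{t-1}$.

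So the gap is not in your algebra but in the unresolved choice of extremum, and it matters: under your reading the corollary is true but cannot be fed into Proposition~\ref{prop: bound} as an upper/lower-bound pair, while under the reading that proposition requires the corollary is false in general. The clean repair, which your write-up should state, is to bypass $\theta_1,\theta_2$ entirely and sum the pointwise identity of Lemma~\ref{difference} over $\mathcal{B}$:
$$
\sum_{B\in\mathcal{B}}|\phi_1^{-1}(B)|-\sum_{B\in\mathcal{B}}|\phi_2^{-1}(B)|=-\bigl(s(2t-1,m-1)-p_{k-1}^{\delta}(2t-1,m-1)\bigr)q^{t-1},
$$
which is exactly the quantity needed there and makes the corollary superfluous. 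Your precaution about $\mathcal{B}$ being nonempty is sensible (for $k=1$ it is empty and the extrema are undefined), but it does not address the max/min mismatch above.
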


Now we are ready to give an upper bound for  $h_{k}^\delta(2t, m)- e_{k}^\delta(2t, m)$.
\begin{proposition}
	\label{prop: bound}
	Let $2t\leq m$ and let $1\leq k\le m$.  Then
	\begin{eqnarray*}
		h_k^\delta(2t, m)- e_k^\delta(2t, m)&\leq \; q\;\dfrac{\prod\limits_{i=0}^{2t-2}(q^{m-1}-q^i)}{\prod\limits_{i=0}^{t-2}(q^{2t-2}-q ^{2i})} + q^{2t}\dfrac{\prod\limits_{i=0}^{2t-1}(q^{m-1}-q^i)}{\prod\limits_{i=0}^{t-1}(q^{2t}-q ^{2i})}
	\end{eqnarray*}
\end{proposition}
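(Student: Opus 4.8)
The plan is to exploit the fiber decompositions of the maps $\phi_1$ and $\phi_2$ from \eqref{eq: restmapchiI} and \eqref{eq: restmapchiII}, as computed in Lemmas \ref{lemma: fiberIhyp}, \ref{lemma: fiberIIhyp}, \ref{lemma: fiberIII-1} and \ref{difference} together with Corollaries \ref{cor: fiberIell} and \ref{cor: fiberIIellip}. Writing $h_k^\delta(2t,m)=\sum_B|\phi_1^{-1}(B)|$ and $e_k^\delta(2t,m)=\sum_B|\phi_2^{-1}(B)|$ with $B$ ranging over $S(2t-2,m-1)\cup S(2t-1,m-1)\cup S(2t,m-1)$, I would split $h_k^\delta(2t,m)-e_k^\delta(2t,m)$ into three partial sums according to the rank of $B$, estimate each, and add. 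I expect the factor $q$ in the first summand of the asserted bound to be exactly what is needed to absorb the rank-$(2t-1)$ base matrices, and the second summand to come out of the rank-$2t$ base matrices essentially for free; the only nontrivial inputs are the product identity $q^{t-1}s(2t-1,m-1)=\frac{\prod_{i=0}^{2t-2}(q^{m-1}-q^i)}{\prod_{i=0}^{t-2}(q^{2t-2}-q^{2i})}$ (call this quantity $Y$, so that $qY$ is the first summand) and the fact, read off from \eqref{eq: noofhyp}, that $(q^{2t-1}-q^{t-1})v_{+1}(2t,m-1)-(q^{2t-1}+q^{t-1})v_{-1}(2t,m-1)=0$.

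For the rank-$(2t-2)$ part, Lemma \ref{lemma: fiberIhyp} and Corollary \ref{cor: fiberIell} say that a hyperbolic $B\in S(2t-2,m-1)$ contributes $q^{m-1}-q^{2t-2}$ to $h_k^\delta(2t,m)$ and $0$ to $e_k^\delta(2t,m)$, and an elliptic $B$ does the reverse, so this part of the difference equals $(v_{+1}(2t-2,m-1)-v_{-1}(2t-2,m-1))(q^{m-1}-q^{2t-2})$, which by \eqref{eq: noofhyp} is exactly $Y$. For the rank-$2t$ part, collecting from Lemma \ref{lemma: fiberIIhyp} and Corollary \ref{cor: fiberIIellip} the hyperbolic, resp. elliptic, $B$ with $f_{k-1}^\delta(B)$ zero and nonzero separately, one gets $\sum_{B\in S(2t,m-1)}|\phi_1^{-1}(B)|=(q^{2t-1}-q^{t-1})v_{+1}(2t,m-1)+q^th_{k-1}^\delta(2t,m-1)$ and $\sum_{B\in S(2t,m-1)}|\phi_2^{-1}(B)|=(q^{2t-1}+q^{t-1})v_{-1}(2t,m-1)-q^te_{k-1}^\delta(2t,m-1)$; subtracting and using the vanishing noted above, this part of the difference equals $q^t\bigl(h_{k-1}^\delta(2t,m-1)+e_{k-1}^\delta(2t,m-1)\bigr)$. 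Since $h_{k-1}^\delta(2t,m-1)+e_{k-1}^\delta(2t,m-1)=|\{B\in S(2t,m-1):f_{k-1}^\delta(B)=0\}|\le s(2t,m-1)=v_{+1}(2t,m-1)+v_{-1}(2t,m-1)$, this part is at most $q^ts(2t,m-1)=q^{2t}(v_{+1}(2t,m-1)-v_{-1}(2t,m-1))$, which by \eqref{eq: noofhyp} is precisely the second summand of the bound.

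For the rank-$(2t-1)$ part, Lemma \ref{lemma: fiberIII-1} gives that each $B$ with $f_{k-1}^\delta(B)=0$ contributes $(q-1)q^{t-1}$ to the difference, and Lemma \ref{difference} gives that each $B$ with $f_{k-1}^\delta(B)\ne0$ contributes the constant $-q^{t-1}$, so the irregularity of those fibers, tracked by Corollary \ref{cor: fiberIII-2}, is irrelevant here (it would only be needed if one preferred to bound $h_k^\delta(2t,m)$ and $e_k^\delta(2t,m)$ separately). Hence this part equals $q^tp_{k-1}^\delta(2t-1,m-1)-q^{t-1}s(2t-1,m-1)$, and since $p_{k-1}^\delta(2t-1,m-1)\le s(2t-1,m-1)$ it is at most $(q-1)q^{t-1}s(2t-1,m-1)=(q-1)Y$ by the product identity above. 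Adding the three parts gives $h_k^\delta(2t,m)-e_k^\delta(2t,m)\le Y+(q-1)Y+q^{2t}(v_{+1}(2t,m-1)-v_{-1}(2t,m-1))=qY+q^{2t}(v_{+1}(2t,m-1)-v_{-1}(2t,m-1))$, as claimed (the degenerate case $t=0$, where $S(2t-2,m-1)=\emptyset$, and the boundary value $r=t-1=0$ in \eqref{eq: noofhyp} with the convention that the zero matrix is hyperbolic, are checked directly).

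The work here is almost entirely bookkeeping rather than ideas: the main obstacle will be assembling the six fiber-size formulas correctly — keeping straight the combined effect of the rank of $B$, its hyperbolic/elliptic type, and whether $f_{k-1}^\delta(B)$ vanishes — and then carrying out the product manipulations (via \eqref{eq: SymmFix} and \eqref{eq: noofhyp}) that identify both $(v_{+1}(2t-2,m-1)-v_{-1}(2t-2,m-1))(q^{m-1}-q^{2t-2})$ and $q^{t-1}s(2t-1,m-1)$ with the single expression $Y$ and exhibit the cancellation in the rank-$2t$ term. Conceptually, the one thing to notice is that the two crude estimates $p_{k-1}^\delta(2t-1,m-1)\le s(2t-1,m-1)$ and $h_{k-1}^\delta(2t,m-1)+e_{k-1}^\delta(2t,m-1)\le s(2t,m-1)$ — the only places where all information about $k$ is discarded — are already tight enough for the stated inequality.
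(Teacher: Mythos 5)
Your proof is correct and uses the paper's framework throughout: the same fiber decomposition of $h_k^\delta(2t,m)$ and $e_k^\delta(2t,m)$ along $\phi_1$ and $\phi_2$ over $S(2t-2,m-1)\cup S(2t-1,m-1)\cup S(2t,m-1)$, the same fiber counts from Lemmas \ref{lemma: fiberIhyp}, \ref{lemma: fiberIIhyp}, \ref{lemma: fiberIII-1}, \ref{difference} and Corollaries \ref{cor: fiberIell}, \ref{cor: fiberIIellip}, the same two crude bounds $p_{k-1}^\delta(2t-1,m-1)\le s(2t-1,m-1)$ and $h_{k-1}^\delta(2t,m-1)+e_{k-1}^\delta(2t,m-1)\le s(2t,m-1)$, and the same product identities from \eqref{eq: noofhyp} at the end. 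The one genuine difference is in the treatment of the irregular fibers over $B\in S(2t-1,m-1)$ with $f_{k-1}^\delta(B)\neq 0$: the paper bounds $h_k^\delta$ above via the maximum $\theta_1$ of $|\phi_1^{-1}(B)|$ and $e_k^\delta$ below via the minimum $\theta_2$ of $|\phi_2^{-1}(B)|$, and then invokes Corollary \ref{cor: fiberIII-2} to assert $\theta_1-\theta_2=-q^{t-1}$; you instead sum the exact per-fiber differences $|\phi_1^{-1}(B)|-|\phi_2^{-1}(B)|=-q^{t-1}$ of Lemma \ref{difference} directly. Your version is the more robust one: since $|\phi_1^{-1}(B)|=|\phi_2^{-1}(B)|-q^{t-1}$ for each such $B$, one only gets $\max_B|\phi_1^{-1}(B)|-\min_B|\phi_2^{-1}(B)|\ge -q^{t-1}$ in general (with equality only when the $\phi_2$-fibers all have the same size), which points the wrong way for an upper bound, so working with the difference fiber by fiber is exactly the repair that makes the estimate airtight. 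All the remaining bookkeeping — identifying the rank-$(2t-2)$ contribution with $q^{t-1}s(2t-1,m-1)$, the cancellation $(q^{2t-1}-q^{t-1})v_{+1}(2t,m-1)=(q^{2t-1}+q^{t-1})v_{-1}(2t,m-1)$, and the final assembly — matches the paper's computation and is carried out correctly.
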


\begin{proof}
We have done most of the hard work. From equation \eqref{eq: restmapchiI} we get
	
\begin{align*}
	h_k^\delta(2t, m)&=  \sum_{\mathclap{\substack{B\in S(2t-2, m-1) }}}|\phi_1^{-1}(B)| \;\;\;+\;\;\; \sum_{\mathclap{\substack{B\in S(2t, m-1) \\ f_{k-1}^\delta(B)=0}}}|\phi_1^{-1}(B)|   \;\;\;+\;\;\; \sum_{\mathclap{\substack{B\in S(2t, m-1) \\ f_{k-1}^\delta(B)\neq 0}}}|\phi_1^{-1}(B)|\\
	& + \sum_{\mathclap{\substack{B\in S(2t-1, m-1) \\ f_{k-1}^\delta(B)= 0}}}|\phi_1^{-1}(B)|\;\;\; + \;\;\;\sum_{\mathclap{\substack{B\in S(2t-1, m-1) \\ f_{k-1}^\delta(B)\neq  0}}}|\phi_1^{-1}(B)|. \\
\end{align*}
From Lemmas \ref{lemma: fiberIhyp}, \ref{lemma: fiberIIhyp}, \ref{lemma: fiberIII-1}, and  the definition of $\theta_1$ above, we get:
$$
\begin{array}{lll}
 h_k^\delta(2t, m)&\leq &  v_{+1}(2t-2, m-1)(q^{m-1}- q^{2t-2}) + h_{k-1}^\delta(2t, m-1) (q^{2t-1}+ q^t -q^{t-1}) \\
& +& (v_{+1}(2t, m-1)- h_{k-1}^\delta(2t, m-1) ) (q^{2t-1}- q^{t-1}) \\
 &+ & p_{k-1}^\delta(2t-1, m-1)\frac{(q-1)}{2} q^{t-1}(q^{t-1}+1) \\  \vspace*{.2cm}
 &+& (\mu_{m-1}(2t-1)- p_{k-1}^\delta(2t-1, m-1))\theta_1\\
 &=& v_{+1}(2t-2, m-1)(q^{m-1}- q^{2t-2}) + h_{k-1}^\delta(2t, m-1) q^t\\
 &+ & v_{+1}(2t, m-1)(q^{2t-1}-q^{t-1})
+ p_{k-1}^\delta(2t-1, m-1)\frac{(q-1)}{2} q^{t-1}(q^{t-1}+1)\\
&+& (\mu_{m-1}(2t-1)- p_{k-1}^\delta(2t-1, m-1))\theta_1.
\end{array}
$$

\noindent Similarly, using equation \eqref{eq: restmapchiII}, Corollaries \ref{cor: fiberIell}, \ref{cor: fiberIIellip}, Lemma \ref{lemma: fiberIII-1} and the definition of  $\theta_2$ above,  we get:
$$
\begin{array}{lll}
e_k^\delta(2t, m)&\geq & v_{-1}(2t-2, m-1)(q^{m-1}- q^{2t-2}) - e_{k-1}^\delta(2t, m-1) q^t \\
&+& v_{-1}(2t, m-1)(q^{2t-1} + q^{t-1}) +p_{k-1}^\delta(2t-1, m-1)\frac{(q-1)}{2} q^{t-1}(q^{t-1}-1)\\
&+& (\mu_{m-1}(2t-1)- p_{k-1}^\delta(2t-1, m-1))\theta_2.
\end{array}
$$
Combining the above two inequalities, we get
$$
\begin{array}{lll}
 h_k^\delta(2t, m)- e_k^\delta(2t, m)&\leq& (v_{+1}(2t-2, m-1) - v_{-1}(2t-2, m-1))(q^{m-1}- q^{2t-2})\\
& + & (h_{k-1}^\delta(2t, m-1) + e_{k-1}^\delta(2t, m-1)) q^t + (v_{+1}(2t, m-1) \\
&-& v_{-1}(2t, m-1))q^{2t-1} - (v_{+1}(2t, m-1) + v_{-1}(2t, m-1)) q^{t-1} \\
&+& p_{k-1}^\delta(2t-1, m-1)(q-1)q^{t-1}\\
&+&(\mu_{m-1}(2t-1)- p_{k-1}^\delta(2t-1, m-1))(\theta_1-\theta_2).
\end{array}
$$
Substituting the trivial bounds $h_{k-1}^\delta(2t, m-1) + e_{k-1}^\delta(2t, m-1)\leq \mu_{m-1}(2t))$, $p_{k-1}^\delta(2t-1, m-1)\leq \mu_{m-1}(2t-1)$ and using Corollary \ref{cor: fiberIII-2} saying $\theta_1-\theta_2=-q^{t-1}$, we get
$$
\begin{array}{lll}
 h_k^\delta(2t, m)- e_k^\delta(2t, m)&\leq& (v_{+1}(2t-2, m-1) - v_{-1}(2t-2, m-1))(q^{m-1}- q^{2t-2})\\
& +& \mu_{m-1}(2t) q^t + (v_{+1}(2t, m-1) - v_{-1}(2t, m-1))q^{2t}\\
&  - & \mu_{m-1}(2t-1) q^{t-1} + \mu_{m-1}(2t-1)q^{t} - {\mu_{m-1}(2t-1)})q^{t-1}.\\
\end{array}
$$
Here we also used the fact that $v_{+1}(2t, m-1) + v_{-1}(2t, m-1)=\mu_{m-1}(2t)$. Now from \cite[Prop. 2.4]{Schmidt}, and equation \eqref{eq: noofhyp}, we have
\begin{equation}
\label{eq: diffhypell}
v_{+1}(2t, m) - v_{-1}(2t, m)=\dfrac{\prod\limits_{i=0}^{2t-1}(q^m-q^i)}{\prod\limits_{i=0}^{t-1}(q^{2t}-q ^{2i})},
\end{equation}
$$
\mu_{m}(2t)=v_{+1}(2t, m) + v_{-1}(2t, m)=q^t\dfrac{\prod\limits_{i=0}^{2t-1}(q^m-q^i)}{\prod\limits_{i=0}^{t-1}(q^{2t}-q ^{2i})},
$$
and
$$
\mu_{m}(2t-1)=\frac{1}{q^{t-1}}\dfrac{\prod\limits_{i=0}^{2t-2}(q^{m}-q^i)}{\prod\limits_{i=0}^{t-2}(q^{2t-2}-q ^{2i})}.
$$
Substituting these values in the inequality above, we get
$$
\begin{array}{lll}
 h_k^\delta(2t, m)- e_k^\delta(2t, m)&\leq & \dfrac{\prod\limits_{i=0}^{2t-3}(q^{m-1}-q^i)}{\prod\limits_{i=0}^{t-2}(q^{2t-2}-q ^{2i})}(q^{m-1}- q^{2t-2})\;
  +\; q^t\cdot q^t\dfrac{\prod\limits_{i=0}^{2t-1}(q^{m-1}-q^i)}{\prod\limits_{i=0}^{t-1}(q^{2t}-q ^{2i})} \\
 &+ & q^{2t-1}\dfrac{\prod\limits_{i=0}^{2t-1}(q^{m-1}-q^i)}{\prod\limits_{i=0}^{t-1}(q^{2t}-q ^{2i})}
   \;-\; q^{t-1}\cdot q^t\dfrac{\prod\limits_{i=0}^{2t-1}(q^{m-1}-q^i)}{\prod\limits_{i=0}^{t-1}(q^{2t}-q ^{2i})}\\
   & +& (q-1)q^{t-1}\frac{1}{q^{t-1}}\dfrac{\prod\limits_{i=0}^{2t-2}(q^{m-1}-q^i)}{\prod\limits_{i=0}^{t-2}(q^{2t-2}-q ^{2i})}\\
   &=& \dfrac{\prod\limits_{i=0}^{2t-2}(q^{m-1}-q^i)}{\prod\limits_{i=0}^{t-2}(q^{2t-2}-q ^{2i})}
   \;+\; q^{2t}\dfrac{\prod\limits_{i=0}^{2t-1}(q^{m-1}-q^i)}{\prod\limits_{i=0}^{t-1}(q^{2t}-q ^{2i})}\\
  & + & (q-1)\dfrac{\prod\limits_{i=0}^{2t-2}(q^{m-1}-q^i)}{\prod\limits_{i=0}^{t-2}(q^{2t-2}-q ^{2i})}\\
   &=& q\;\dfrac{\prod\limits_{i=0}^{2t-2}(q^{m-1}-q^i)}{\prod\limits_{i=0}^{t-2}(q^{2t-2}-q ^{2i})} \;+\; q^{2t}\dfrac{\prod\limits_{i=0}^{2t-1}(q^{m-1}-q^i)}{\prod\limits_{i=0}^{t-1}(q^{2t}-q ^{2i})}.
\end{array}
$$

 This completes the proof of the proposition.
\end{proof}	
 Now we are ready to prove the main theorem of the article, i.e., we are ready to compute the minimum distance of the code $\CAt$.

\begin{theorem}
	\label{thm: mainthm}
	Let $2t\leq m$ be positive integers. The minimum distance of the code $\CAt$ is $W_1^\delta(2t, m)$, where $W_1^\delta(2t, m)$ is given by
 $$
(q-1)q^{m-1}\nu_{m-1}(2t-2) + (q-1)q^{2t-1}\mu_{m-1}(2t-1)
	 + (q-1)(q^{2t-1}-q^{-1})\mu_{m-1}(2t)
 $$
\end{theorem}
\begin{proof}
	In the view of Corollary \ref{cor: wtdiff},  it is enough to prove that for $1\leq k\le m$ and $2t\le m$,
	$$
	v_{+1}(2t, m) - v_{-1}(2t, m) \geq h_k^\delta(2t, m) - e_k^\delta(2t, m).
	$$
	From equation \eqref{eq: diffhypell} and Proposition \ref{prop: bound}, we get
 $$
\begin{array}{lll}
  &(v_{+1}(2t, m) - v_{-1}(2t, m) ) \;-\; (h_k^\delta(2t, m) - e_k^\delta(2t, m)) \vspace*{.2cm}\\
&\geq \dfrac{\prod\limits_{i=0}^{2t-1}(q^m-q^i)}{\prod\limits_{i=0}^{t-1}(q^{2t}-q ^{2i})} \;-\; q^{2t}\dfrac{\prod\limits_{i=0}^{2t-1}(q^{m-1}-q^i)}{\prod\limits_{i=0}^{t-1}(q^{2t}-q ^{2i})}
\;-\; q\;\dfrac{\prod\limits_{i=0}^{2t-2}(q^{m-1}-q^i)}{\prod\limits_{i=0}^{t-2}(q^{2t-2}-q ^{2i})} \\
&= \dfrac{\prod\limits_{i=0}^{2t-2}(q^{m-1}-q^i)}{\prod\limits_{i=0}^{t-1}(q^{2t}-q ^{2i})} \left( q^{2t-1}(q^m-1) - q^{2t}(q^{m-1}- q^{2t-1})\right) - q\;\dfrac{\prod\limits_{i=0}^{2t-2}(q^{m-1}-q^i)}{\prod\limits_{i=0}^{t-2}(q^{2t-2}-q ^{2i})}\\
&= \dfrac{\prod\limits_{i=0}^{2t-2}(q^{m-1}-q^i)}{\prod\limits_{i=0}^{t-1}(q^{2t}-q ^{2i})} \left( q^{2t-1}(q^{2t}-1)\right) - q\;\dfrac{\prod\limits_{i=0}^{2t-2}(q^{m-1}-q^i)}{\prod\limits_{i=0}^{t-2}(q^{2t-2}-q ^{2i})}\\
&=\dfrac{\prod\limits_{i=0}^{2t-2}(q^{m-1}-q^i)}{\prod\limits_{i=0}^{t-2}(q^{2t-2}-q ^{2i})} \left( \frac{q^{2t-1}(q^{2t}-1)}{q^{2t-2}(q^{2t}-1)}- q\right)\\
&=0.
\end{array}
 $$


\end{proof}

\begin{corollary}
	\label{cor: minimumdistance}
	 The minimum distance of the code $\widehat{C}_{symm}(2t, m)$ is $d$, where
	 $$
	 d=q^{m-1}\nu_{m-1}(2t-2) + q^{2t-1}\mu_{m-1}(2t-1)
	 + (q^{2t-1}-q^{-1})\mu_{m-1}(2t).
	 $$
\end{corollary}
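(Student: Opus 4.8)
The plan is to read the assertion off directly from the two main results, with one short bookkeeping step in between. First, by Propositions \ref{Prop: Diag} and \ref{squares} every nonzero codeword of $\CAt$ has Hamming weight $W_k^\delta(2t,m)$ for some $1\le k\le m$ and some $\delta\in\Fq^*$, so the minimum distance of $\CAt$ is the least of these numbers; and Theorem \ref{thm: mainthm} says precisely that this least value is $W_1^\delta(2t,m)$. Thus the first step is simply to quote Theorem \ref{thm: mainthm}, which reduces the problem to writing $W_1^\delta(2t,m)$ in closed form.

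Second, I would substitute the value of $W_1^\delta(2t,m)$ supplied by Theorem \ref{thm: WeightW1}. Since the rank $2t$ is even, the second branch of that theorem applies (with the symbol ``$t$'' there in the role of $2t$ here), giving
\begin{align*}
W_1^\delta(2t,m)=(q-1)q^{m-1}n_s(2t-2,m-1)&+(q-1)q^{2t-1}s(2t-1,m-1)\\
&+(q-1)\bigl(q^{2t-1}-q^{-1}\bigr)s(2t,m-1).
\end{align*}
Factoring out $q-1$ exhibits $W_1^\delta(2t,m)=(q-1)d$, where $d$ is the quantity displayed in the statement. Finally, by \eqref{eq: wtafftoDet} of Remark \ref{rmk: affine} the weights in the projective code $C_{symm}(2t,m)$ and in the affine code $\CAt$ differ by exactly the factor $q-1$; combining this with the previous identity gives that the minimum distance of $\CAt$ equals $(q-1)d$ and that of $C_{symm}(2t,m)$ equals $d$, which is the claim.

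No genuine obstacle remains: the substantive work — the fiber computations for the maps $\phi_1,\phi_2$, the hyperbolic/elliptic/parabolic case analysis of Lemmas \ref{lemma: fiberIhyp}--\ref{lemma: fiberIII-2}, the estimate of Proposition \ref{prop: bound}, and the final cancellation carried out in Theorem \ref{thm: mainthm} — has all been done upstream. The only things that require a moment's care are keeping the index substitution $t\mapsto 2t$ and the normalization factor $q-1$ straight, and invoking the boundary conventions already used in the proof of Theorem \ref{thm: weightoffk} (in particular $s(0,m-1)=1$), so that the formula is already meaningful for the smallest admissible rank $2t=2$; and in the extremal case $2t=m$ one should additionally recall the remark after Theorem \ref{thm: weightoffk}, which shows $W_k^\delta(2t,m)$ is then independent of $k$, making the identification of the minimum automatic.
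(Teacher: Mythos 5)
Your proposal is correct and follows essentially the same route as the paper, whose proof of this corollary is precisely the one-line combination of Theorem \ref{thm: mainthm}, Theorem \ref{thm: WeightW1} (even branch, with $t\mapsto 2t$), and equation \eqref{eq: wtafftoDet}. Your added observation that the displayed quantity $d$ is the minimum distance of the projective code $C_{symm}(2t,m)$ while the affine code $\widehat{C}_{symm}(2t,m)$ has minimum distance $(q-1)d$ is a worthwhile clarification of the normalization left implicit in the paper's statement.
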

\begin{proof}
	The proof of the corollary follows from Theorem \ref{thm: WeightW1}, Theorem \ref{thm: mainthm}, and equation \eqref{eq: wtafftoDet}.
	
\end{proof}

 Let $m$ be a positive integer and $1\le 2t+1< m$.	For the symmetric determinantal code $C_{symm}(2t+1, m)$, we propose the following conjecture:

\begin{conjecture}
	The minimum distance of the code ${C}_{symm}(2t+1, m)$ is given by $W_2^\delta(2t+1, m)$  where $-\delta\in\Fq^*$ is a square. Furthermore, there exists a positive $\mathcal{E}(2t+1, m)$ such that
		$$
		W_2^{\delta_1}(2t+1, m)  	< 	W_1^{\delta}(2t+1, m) <	W_2^{\delta_2}(2t+1, m)
		$$
		where $-\delta_1$ is a square and $-\delta_2$ is a non square, and:
		$$
		W_2^{\delta_1}(2t+1, m)=	W_1^{\delta}(2t+1, m) \pm \mathcal{E}(2t+1, m).
  $$
	
\end{conjecture}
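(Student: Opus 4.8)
The plan is to follow the strategy of the even-rank case, replacing the type invariant (hyperbolic/elliptic) by the parabolic discriminant. Put
\[
\Xi_k^\delta:=\sum_{\substack{B\in S(2t+1,m-1)\\ f_{k-1}^\delta(B)=0}}\lambda_B(2t+1,m-1)+\sum_{\a\in\Fq^*}\sum_{\substack{B\in S(2t+1,m-1)\\ f_{k-1}^\delta(B)=\a}}\gamma_\a(B,2t+1,m-1),
\]
so that Theorem~\ref{thm: weightoffk} gives $W_k^\delta(2t+1,m)=C-\Xi_k^\delta$ with $C=C(t,m)$ independent of $k$ and $\delta$; the minimum distance is attained at the $(k,\delta)$ maximising $\Xi_k^\delta$. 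Every $B\in S(2t+1,m-1)$ is parabolic, so by \eqref{eq: noofsoln} $\lambda_B(2t+1,m-1)=q^{2t}$ for all of them; and for $\a\in\Fq^*$ the matrix $B_\a=\diag(B,\a)$ has even rank $2t+2$ and, by the discussion after \eqref{eq: gammaalpha}, is hyperbolic or elliptic according as $-\a/c_B\in(\Fq^*)^2$ or not, where $c_B\in\Fq^*/(\Fq^*)^2$ is the parabolic discriminant of $B$ (Definition~\ref{standard}). Hence $\gamma_\a(B,2t+1,m-1)$ takes only the two values $\gamma^+$ (when $-\a/c_B$ is a square) and $\gamma^-$ (otherwise), and a direct evaluation from \eqref{eq: noofsoln}, \eqref{eq: gammaalpha}, \eqref{eq: noofhyp} gives the two identities I will use repeatedly: $\gamma^++\gamma^-=2q^{2t}$ and $\gamma^+-\gamma^-=2q^{t}$.

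I would establish the ``halfway'' part first. For $k=1$ one has $f_0^\delta\equiv 0$, whence $\Xi_1=q^{2t}s(2t+1,m-1)$, independent of $\delta$. For $k=2$, $f_1^\delta(B)=\delta b_{11}$, so sorting $B$ by the value $b_{11}$ and noting that $-\delta_1$ is a square while $-\delta_2$ is a non-square, one sees that passing from $\delta_1$ to $\delta_2$ swaps the $\gamma^+$ and $\gamma^-$ contribution of every $B$ with $b_{11}\neq 0$; together with $\gamma^++\gamma^-=2q^{2t}$ this yields $\Xi_2^{\delta_1}+\Xi_2^{\delta_2}=2\Xi_1$, i.e. $W_1^\delta(2t+1,m)=\tfrac12\bigl(W_2^{\delta_1}(2t+1,m)+W_2^{\delta_2}(2t+1,m)\bigr)$. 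Setting $\theta(2t+1,m):=\Xi_2^{\delta_1}-\Xi_1$ gives $W_2^{\delta_1}=W_1-\theta$ and $W_2^{\delta_2}=W_1+\theta$, and since $\gamma^\pm-q^{2t}=\pm q^t$ one gets $\theta=q^{t}\bigl(N_{\square}-N_{\boxtimes}\bigr)$, where $N_{\square}$ (resp. $N_{\boxtimes}$) counts the $B\in S(2t+1,m-1)$ with $b_{11}\neq0$ and $b_{11}c_B$ a square (resp. a non-square). Block-diagonalising $B\sim\diag(b_{11},B'')$ with $B''\in S(2t,m-2)$ one checks that $b_{11}c_B$ is a square exactly when $B''$ is hyperbolic, so $N_\square=(q-1)q^{m-2}v_{+1}(2t,m-2)$ and $N_\boxtimes=(q-1)q^{m-2}v_{-1}(2t,m-2)$; positivity of $\theta$ is then immediate from \eqref{eq: noofhyp}, and this step also yields the explicit value of $\theta(2t+1,m)$.

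For the claim that $W_2^{\delta_1}$ is the actual minimum distance I would proceed as in Proposition~\ref{prop: bound} and Theorem~\ref{thm: mainthm}. Writing $p_{k-1}^\delta:=|\{B\in S(2t+1,m-1):f_{k-1}^\delta(B)=0\}|$ and $g_{k-1}^{\delta}:=|\{B\in S(2t+1,m-1):f_{k-1}^\delta(B)\neq0,\ -f_{k-1}^\delta(B)/c_B\in(\Fq^*)^2\}|$, the identities above give $\Xi_k^\delta=\gamma^- s(2t+1,m-1)+2q^{t}\bigl(\tfrac12 p_{k-1}^\delta+g_{k-1}^{\delta}\bigr)$, so everything reduces to showing $\tfrac12 p_{k-1}^\delta+g_{k-1}^\delta\le\tfrac12 p_{1}^{\delta_1}+g_{1}^{\delta_1}$ for all $1\le k\le m$ and all $\delta$. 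I would bound the left side by restricting the map $\psi$ of \eqref{eq: mappsi} to the relevant subsets of $S(2t+1,m)$ — refined now by the square class of the parabolic discriminant — and decomposing the fibres over the strata $S(2t-1,m-1)$, $S(2t,m-1)$, $S(2t+1,m-1)$: the fibre cardinalities over the first two strata are uniform (analogues of Lemmas~\ref{lemma: fiberIhyp} and \ref{lemma: fiberIIhyp}), while over the irregular stratum one needs an analogue of Lemma~\ref{lemma: fiberIII-2}; induction on $m$ (with $f_{k-1}^\delta$ on $B$ relating to $f_{k-2}^\delta$ on the peeled matrix) and on $k$ then closes the argument, the trivial bounds $p\le s$, $g\le s$ being inserted exactly where they were in Proposition~\ref{prop: bound}.

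The hard part is this last fibre analysis. In the odd-rank case the governing invariant is the discriminant in $\Fq^*/(\Fq^*)^2$ rather than a single sign, so the counterpart of Lemma~\ref{lemma: fiberIII-2} — that the relevant difference of fibre cardinalities over $B\in S(2t+1,m-1)$ with $f_{k-1}^\delta(B)\neq0$ is constant — will require a case division along the square classes of both $\a/c_B$ and $-\a/c_B$, parallel to the four-case split in the proof of Lemma~\ref{lemma: fiberIII-2} but now bookkeeping a ``$\square$'' count and a ``$\boxtimes$'' count simultaneously. A second, genuinely new difficulty, absent from the even case where the optimiser was simply $k=1$: one must arrange the uniform upper bound so that equality holds precisely at $k=2$, $\delta=\delta_1$, which forces a more careful choice of which trivial estimates to use and in which direction, and is ultimately why the minimum-weight codeword in the odd case has rank $2$ rather than rank $1$.
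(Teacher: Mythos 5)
First, a point of order: the paper does not prove this statement. It is stated as a conjecture, explicitly left to future research, and is supported only by SAGE-generated tables for $m=3,4,5$. So there is no proof in the paper to compare yours against; I can only assess your proposal on its own merits.

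Your reduction via Theorem \ref{thm: weightoffk} to maximising $\Xi_k^\delta$, and your treatment of the second assertion (the symmetric splitting $W_2^{\delta_1}=W_1-\theta$, $W_2^{\delta_2}=W_1+\theta$ with $\theta>0$), look essentially correct and complete: since every $B\in S(2t+1,m-1)$ is parabolic with a well-defined discriminant class $c_B$, one indeed has $\gamma_\a(B,2t+1,m-1)=q^{2t}+q^t\chi(-\a/c_B)$ for the quadratic character $\chi$, the identities $\gamma^++\gamma^-=2q^{2t}$ and $\gamma^+-\gamma^-=2q^t$ follow, and replacing $\delta_1$ by $\delta_2$ flips the sign of $\chi(-\delta b_{11}/c_B)$ for every $B$ with $b_{11}\neq 0$, giving $\Xi_2^{\delta_1}+\Xi_2^{\delta_2}=2\Xi_1$; positivity of $\theta$ then follows from your Schur-complement bijection and $v_{+1}>v_{-1}$. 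This is genuinely more than the paper establishes. One warning, though: your closed form $\theta=q^t(q-1)q^{m-2}\bigl(v_{+1}(2t,m-2)-v_{-1}(2t,m-2)\bigr)$ agrees with the paper's tables for rank $1$ (all $m$) but \emph{disagrees} for rank $3$: for $(m,t)=(4,3)$ you get $q^4(q-1)^2$ where the table lists $q^4(q-1)$, and for $(m,t)=(5,3)$ you get $q^5(q-1)(q^3-1)$ against the table's $q^5(q-1)(2q^2-1)$. My own recomputation of $\sum_{B}\chi(b_{11}c_B)$ via the Schur complement supports your value, and your framework does reproduce the proven Theorem \ref{thm: wtdiff} in the even case, so the error may well sit in the tables --- but you must resolve this conflict explicitly rather than citing the tables as corroboration.

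The genuine gap is the first and main assertion, that $W_2^{\delta_1}(2t+1,m)$ is the minimum over \emph{all} $k$ and $\delta$. Here you have only a plan. Reducing to the inequality $\tfrac12 p_{k-1}^\delta+g_{k-1}^\delta\le\tfrac12 p_1^{\delta_1}+g_1^{\delta_1}$ is correct bookkeeping, but the two hard ingredients are missing. (i) The analogue of Lemma \ref{lemma: fiberIII-2}: you now need fibre counts of $\psi$ refined by the square class of the parabolic discriminant over three strata, and over the irregular stratum you need a constant-difference statement for a \emph{pair} of counts (your square and non-square tallies) rather than a single signed difference; you assert this will work ``parallel to the four-case split'' but do not verify that the analogous cancellations occur. (ii) More seriously, the optimiser is now $k=2$, $\delta=\delta_1$, which is \emph{strictly larger} than the $k=1$ value. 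The even-rank proof of Proposition \ref{prop: bound} and Theorem \ref{thm: mainthm} survives on crude substitutions such as $p_{k-1}^\delta\le s(2t-1,m-1)$ and lands exactly on zero; any bound obtained the same way here will be tight at $k=1$, not at $k=2$, and hence cannot prove what is needed. You name this difficulty yourself but offer no mechanism to overcome it --- no induction hypothesis that tracks the discriminant-class imbalance through the peeling, and no argument that the inequality is strict for $k\neq 2$ or for $\delta=\delta_2$. Until that is supplied, the minimality claim, which is the substance of the conjecture, remains unproven.
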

 Using SAGE, we computed the weights $\Wktd$ for some small values of $m$, and for a fixed $m$, we looked at all possible values of $k$ and $t$. Studying those results, we here give formulas for the weights $\Wktd$ over $\Fq$. In the  tables below,  we list them all. The table is over $\Fq$ and $\delta\in\Fq^*$ is either a square or a nonsquare.
\\
\\
\begin{tabular}{ |p{3cm}|p{3cm}|p{3cm}|p{3cm}|  }
	\hline
	\bf m=3 & \bf t=1 & \bf t=2 & \bf t=3  \\
	\hline
$ W_1^{\delta}(t, m) $   & $q^2(q-1)$ & $q^4(q-1)$ &$q^5(q-1)$\\
	\hline
$ W_2^\delta(t, m) $  &  $q^2(q-1)\pm q(q-1)$ & $q^4(q-1)$ &  $q^5(q-1)$\\
\hline
$W_3^\delta(t, m)$&   $q^2(q-1)$ &  $q^4(q-1)+ q^2(q-1) $   &$q^5(q-1)$\\
	\hline
\end{tabular}
\\
\\
\\
\begin{tabular}{ |p{2cm}|p{2.5cm}|p{2.5cm}|p{2.5cm}|p{2.5cm}|  }
	\hline
	\bf m=4 & \bf t=1 & \bf t=2 & \bf t=3 & \bf t=4 \\
	\hline
	$ W_1^{\delta}(t, m) $   & $q^3(q-1)$ & $q^6(q-1)$ &$q^8(q-1)+ q^5(q-1)^2$ & $q^9(q-1)$\\
	\hline
	$ W_2^\delta(t, m) $  &  $q^3(q-1)\pm q^2(q-1)$ & $q^6(q-1)$ & $W_1^\delta(3, 4)\pm q^4(q-1)$ & $q^9(q-1)$\\
	\hline
	$W_3^\delta(t, m)$&   $q^3(q-1)$ &  $q^6(q-1)+ q^4(q-1) $   &$W_1^\delta(3, 4)$ & $q^9(q-1)$\\
	\hline
		$W_4^\delta(t, m)$&   $q^3(q-1)\pm q(q-1)$ &  $q^6(q-1)+ q^4(q-1) $   & $W_1^\delta(3, 4)\pm q^3(q-1)$ & $q^9(q-1)$\\
			\hline
\end{tabular}
\\
\\
\\
\begin{tabular}{ |p{1.5cm}|p{2cm}|p{2cm}|p{3cm}|p{2cm}| p{2cm}| }
	\hline
	\bf m=5 & \bf t=1 & \bf t=2 & \bf t=3 & \bf t=4 & \bf t=5\\
	\hline
	$ W_1^{\delta}(t, m) $   & $q^4(q-1)$ & $q^8(q-1)$ &$q^{11}(q-1)+ q^8(q-1)^2 + q^6(q-1)(q^2-1)$ & $q^{13}(q-1) + q^{10}(q-1)^2$  & $q^{14}(q-1)$\\
	\hline
	$ W_2^\delta(t, m) $  &  $q^4(q-1)\pm q^3(q-1)$ & $W_1^\delta(2, 5)$ & $W_1^\delta(3, 5)\pm (q^7(q-1)^2 + q^5(q-1)(q^2-1))$ & $W_1^\delta(4, 5)$ &  $q^{14}(q-1)$\\
	\hline
	$W_3^\delta(t, m)$&   $q^4(q-1)$ &  $W_1^\delta(2, 5) + q^6(q-1)$   &$W_1^\delta(3, 5)$ & $W_1^\delta(4, 5) + q^8(q-1)^2$  &  $q^{14}(q-1)$\\
	\hline
	$W_4^\delta(t, m)$&  $q^4(q-1)\pm q^2(q-1)$ &  $W_3^\delta(2, 5)$   &  $W_1^\delta(3, 5)\pm (q^6(q-1) + q^4(q-1)(q^2-1))$ & $W_3^\delta(4, 5)$  &  $q^{14}(q-1)$\\
	\hline
	$W_5^\delta(t, m)$&   $q^4(q-1)$ &  $W_3^\delta(2, 5) + q^4(q-1)$   & $W_1^\delta(3, 5)$ &  $W_3^\delta(4, 5) -q^6(q-1)$ &  $q^{14}(q-1)$\\
	\hline
\end{tabular}

\begin{remark}
The observation that  $W^{\delta}_2(t,m)=W^{\delta}_1(t,m)$, for all even $t$, and all $\delta$, for $m=3,4,5$ can be generalized to the same result for all natural numbers $m \ge 2$ after a refined study of the proof of Proposition \ref{prop: bound}. So for even $t$ the minimum distance $d$ is computed by the $W^{\delta}_2(t,m)$ also.
The table clearly indicates that for even $t$ the codewords of  $\C$ have exactly $[\frac{m+1}{2}] -1$ different weights.
\end{remark}

\section{Acknowledgements}
Peter Beelen would like to acknowledge the support from The Danish Council for Independent Research (DFF-FNU) for the project \emph{Correcting on a Curve}, Grant No.~8021-00030B. Trygve Johnsen is supported by grant 280731 from the Research Council of Norway (RCN). This work was also partially supported by the project Pure Mathematics in Norway, funded by Bergen Research Foundation and Tromsø Research Foundation. Prasant Singh is with the Department of Mathematics, IIT Jammu. He would like to thank the RCN Grant- 280731, and the Department of Mathematics and Statistics, UiT as  a major part of the work in this article was done when he was a postdoc at UiT.

	\clearpage

\end{document}